\documentclass[journal,onecolumn]{IEEEtran}
\usepackage{float} 
\usepackage{algorithmic}
\usepackage{graphicx}
\usepackage{textcomp}
\usepackage{cite}
\usepackage{amsthm,amsmath,amssymb}
\usepackage{mathrsfs}
\usepackage{multirow}
\usepackage{microtype}
\usepackage{hyperref}
\usepackage[capitalize]{cleveref}
\usepackage{verbatim}
\usepackage{xcolor}
\usepackage{enumitem}
\usepackage{arydshln} % for dashlines in tables/arrays

\newtheorem{theorem}{Theorem}[section]
\newtheorem{lemma}[theorem]{Lemma}

\newtheorem{definition}[theorem]{Definition}

\newtheorem{remark}[theorem]{Remark}
\newtheorem{claim}[theorem]{Claim}

\begin{document}

\title{The Capacity of  Private Information Retrieval from Coded Storage Systems with Colluding, Robust and Byzantine Servers}
\author{{\textbf{Rui Sun} and \textbf{Yiwei Zhang}}
\thanks{The authors are with State Key Laboratory of Cryptography and Digital Economy Security, Key Laboratory of Cryptologic Technology and Information Security of Ministry of Education, School of Cyber Science and Technology, Shandong University, Qingdao, Shandong, 266237, China. }
\thanks{Research supported by National Key Research and Development Program of China under Grant No. 2022YFA1004900, National Natural Science Foundation of China under Grant No. 12231014, and the Taishan Scholars Program. 
Corresponding author: Y. Zhang (ywzhang@sdu.edu.cn).}
}

\maketitle

\begin{abstract}
We consider the problem of private information retrieval 
from MDS-coded databases with $T$ colluding servers,
$U$ robust  servers, and $B$ Byzantine servers,
referred to as U-B-MDS-TPIR. In the U-B-MDS-TPIR setting, $M$ files are stored across $N$ servers, and in this paper each file is independently encoded using an $(N,K)$ GRS code. We characterize the  capacity
among linear, full support-rank U-B-MDS-TPIR schemes as
$C=\frac{N-U-2B}{N-U}(1+\frac{K+T-1}{N-U-2B}+\cdots+(\frac{K+T-1}{N-U-2B})^{M-1})^{-1}$
when $N-U-2B\ge KT$.
When $K=1$ or $T=1$, the same expression gives the capacity
without the linearity and full support-rank restrictions.  We further extend the construction to the retrieval of \(P\) out of \(M\) files. For \(P \ge \frac{M}{2}\), the resulting multi-file scheme achieves capacity when \(K=1\) or \(T=1\).
\end{abstract}

\section{Introduction}
Private information retrieval (PIR) enables a user to retrieve a desired data item from a collection of servers without revealing which item is being requested. The problem was introduced by Chor et al.~\cite{492461,chor1998private}, where the objective is to privately retrieve one bit from a database while minimizing the total communication between the user and the servers, including both the upload of the queries by the user and the download of the responses from servers. In the information-theoretic formulation of PIR for distributed storage systems, the data are typically organized into files whose sizes can be arbitrarily large. In this regime, the upload cost of the queries is usually negligible compared with the download cost~\cite{chan2015private}, and the efficiency of a PIR scheme is  measured primarily by its download rate, defined as the ratio between the size of the desired file and the total amount of downloaded information. For a comprehensive overview of PIR and related topics, see the survey articles~\cite{gasarch2004survey,ulukus2022private,vithana2023private,d2024guided} and the references therein.

 In the MDS-TPIR setting, $M$ files are stored across $N$ servers, with each file independently encoded by the same $[N,K]$ MDS code. A user wishes to retrieve one of the files without revealing its index to any set of up to $T$ colluding servers. The {\it PIR rate} is defined as the ratio between the size of the desired file and the total amount of downloaded information. The maximum achievable rate is called the {\it MDS-TPIR capacity}.
Sun and Jafar~\cite{sun_12017capacity} characterized the capacity of classical PIR with replicated storage and non-colluding servers as $C_{\mathrm{PIR}}=(1+\frac{1}{N}+\cdots+\frac{1}{N^{M-1}})^{-1}$.  For the non-colluding case $T=1$, Banawan and Ulukus~\cite{banawan2018capacity} characterized the exact MDS-PIR capacity as $C_{\mathrm{MDS\text{-}PIR}}=(1+\frac{K}{N}+\cdots+(\frac{K}{N})^{M-1})^{-1}$. For the repetition-coded case $K=1$, Sun and Jafar~\cite{sun_22017capacity} characterized the TPIR capacity as $C_{\mathrm{TPIR}}=(1+\frac{T}{N}+\cdots+(\frac{T}{N})^{M-1})^{-1}$. Thus, for the degenerate cases $K=1$ or $T=1$, the MDS-TPIR capacity is completely characterized by $(1+\rho+\cdots+\rho^{M-1})^{-1}$, where $\rho=\frac{T}{N}$ for $K=1$ and $\rho=\frac{K}{N}$ for $T=1$.
For the non-degenerate case with $K\geq 2$ and $T\geq 2$, several achievable schemes have been proposed~\cite{freij2017private,zhang2019general,d2019one,holzbaur2021toward,sun2026refinement,sun2017private,sun2025mds,sun2026disguise}. In particular, for  $N\ge KT$, the capacity within
the class of linear, full support-rank MDS-TPIR schemes is given
by the same expression $(1+\rho+\cdots+\rho^{M-1})^{-1}$,
with $\rho=\frac{K+T-1}{N}$~\cite{d2019one,holzbaur2021toward,sun2026refinement}.
However, without the linearity and full support-rank restrictions,
the exact MDS-TPIR capacity remains unknown in general.

In addition to preserving user privacy, PIR systems may also need to tolerate server failures and malicious behavior. Two common models are robust (unresponsive) servers, which return no answer, and Byzantine servers, which may return arbitrary erroneous answers. For replicated storage, Sun and Jafar~\cite{sun_22017capacity} characterized the capacity of PIR with colluding and robust servers. Banawan and Ulukus~\cite{Byzantinebanawan2018capacity} subsequently characterized the capacity of PIR from replicated databases with  colluding, robust and Byzantine servers.
Robust PIR from coded storage was studied in~\cite{tajeddine2017robust}. Zhang and Ge~\cite{zhang2017private} further extended the MDS-TPIR model in two separate directions by considering robust servers and Byzantine servers, respectively. Tajeddine et al.~\cite{tajeddine2019private} proposed an explicit scheme for Reed--Solomon coded storage with $T$ colluding, $U$ robust and $B$ Byzantine  servers. When $N-U-2B\geq K+T$, the scheme achieves the rate
\begin{align}
R
=
1-\frac{K+T+2B-1}{N-U}.
\label{eq:Taj-rate}
\end{align}
They further conjectured that \eqref{eq:Taj-rate} gives the asymptotic capacity as $M\to\infty$. This conjecture is known to hold for $K=1$ \cite{Byzantinebanawan2018capacity}. Other extensions have considered multi-round Byzantine PIR~\cite{yao2019capacity} and $X$-secure U-B-MDS-TPIR~\cite{jia2020xsecure}.   However, except for the case $K=1$, the exact finite-$M$ capacity of U-B-MDS-TPIR remains unknown. In particular, even the non-colluding case $T=1$ has not been characterized.

Another related extension is multi-file PIR, where a user wishes to retrieve $P$ files simultaneously. Banawan and Ulukus~\cite{banawan2018multi} studied multi-file PIR for replicated storage and showed that joint retrieval can be more efficient than repeatedly applying a single-file PIR scheme. They characterized the exact capacity for $K=T=1$ when $P\geq M/2$. Subsequently, Zhang and Ge~\cite{zhang2017private} extended the problem to MDS-coded storage with colluding servers and proposed a multi-file MDS-TPIR scheme for $P\geq M/2$, which achieves the capacity when either $K=1$ or $T=1$. 

In this paper, we consider PIR from an $[N,K]$ GRS-coded storage system with $M$ files, $T$ colluding servers, $U$ robust servers, and $B$ Byzantine servers, which we refer to as the U-B-MDS-TPIR problem. Our main contributions are summarized as follows.
\begin{itemize}
    \item We characterize the  capacity
among linear, full support-rank  U-B-MDS-TPIR schemes as 
    \begin{align}
    C
    =
    \frac{N-U-2B}{N-U}
    \left(
    1+\frac{K+T-1}{N-U-2B}
    +\cdots+
    \left(\frac{K+T-1}{N-U-2B}\right)^{M-1}
    \right)^{-1},
    \label{eq:intro-general-rate}
    \end{align}
    for $N-U-2B\geq KT$. When $K=1$ or $T=1$, the same expression gives the capacity
without the linearity and full support-rank restrictions.
    \item  We extend the proposed U-B-MDS-TPIR scheme to the multi-file setting.
For $N-U-2B\geq KT$, the resulting
$P$-out-of-$M$ multi-file U-B-MDS-TPIR scheme achieves the rate
\[
R=
\frac{N-U-2B}{N-U}
\left(
1+\frac{(K+T-1)(M-P)}{P(N-U-2B)}
\right)^{-1}.
\]
For $P \ge \frac{M}{2}$, we further show that this rate is
capacity-achieving when either $K=1$ or $T=1$.
\end{itemize}

The  capacity expression in \eqref{eq:intro-general-rate} also reveals the connection between the finite-$M$ and asymptotic results. As $M\rightarrow\infty$,
\begin{align*}
\frac{N-U-2B}{N-U}
    \left(
    1+\frac{K+T-1}{N-U-2B}
    +\cdots+
    \left(\frac{K+T-1}{N-U-2B}\right)^{M-1}
    \right)^{-1}
\longrightarrow
1-\frac{K+T+2B-1}{N-U},
\end{align*}
which coincides with the previously conjectured asymptotic capacity in \eqref{eq:Taj-rate}. Moreover, the rate in \eqref{eq:intro-general-rate} decreases with $M$ and  approaches  this asymptotic value, showing an additional finite-$M$ gain for every finite $M$.

\section{Preliminaries}\label{Sec:pre}
For positive integers $a \le b$, let $[a:b] = \{a, a+1, \dots, b\}$, and abbreviate $[1:n]$ as $[n]$. Given an index set $\mathcal{I} = \{i_1, \dots, i_k\} \subseteq [N]$ with $i_1 < \dots < i_k$ and a vector $\mathbf{x} = (x_1, \dots, x_N)$, we write $\mathbf{x}_{\mathcal{I}} = (x_{i_1}, \dots, x_{i_k})$. The inner product of two vectors is denoted by $\langle \mathbf{x}, \mathbf{y} \rangle$.
Let $\mathbf{G}$ be a $K \times N$ matrix over $\mathbb{F}_q$. We denote by $\operatorname{rowsp}(\mathbf{G})$ the linear code generated by its rows, and by $\mathbf{G}(:,\mathcal{I})$ the submatrix formed by its columns indexed by $\mathcal{I}$. For a positive integer $a$, define $
\mathbf{G}_{\langle a \rangle} \triangleq [\mathbf{G} \mid \mathbf{G} \mid \cdots \mid \mathbf{G}] \in \mathbb{F}_q^{K \times aN},$
where $\mathbf{G}$ is concatenated $a$ times horizontally.
Finally, we write $X \sim Y$ to indicate that $X$ and $Y$ are identically distributed.

\subsection{Problem setting of U-B-MDS-TPIR}

Consider $M$ independent files $W^{[1]}, \dots, W^{[M]} \in \mathbb{F}_q^{\frac{L}{K}\times K}$. In $q$-ary units,
\begin{align*}
    H(W^{[1]}) &= \dots = H(W^{[M]}) = L,  \\
    H(W^{[1]}, \dots, W^{[M]}) &= H(W^{[1]}) + \dots + H(W^{[M]}). 
\end{align*}

The files are stored in a system consisting of $N$ servers. The  server $n$ stores $(W^{[1]}_{n}, W^{[2]}_{n}, \dots, W^{[M]}_{n})$, where $W^{[m]}_{n} \in \mathbb{F}_q^{\frac{L}{K} \times 1}$ represents $\frac{L}{K}$ coded symbols of $W^{[m]}$, $m \in [M]$, $n\in[N]$. The storage system applies an $[N,K]$-MDS code independently to each file. Thus the following hold:
\begin{align*}
    H(W^{[m]}_{n} | W^{[m]}) &= 0, \quad H(W^{[m]}_{n}) = L/K. \\
 [\mbox{MDS}] ~    H(W^{[m]} | W^{[m]}_{\mathcal{K}}) &= 0, \quad \forall \mathcal{K}\subseteq [N], \; |\mathcal{K}| = K. 
\end{align*}

To  privately retrieve $W^{[m]}$, $m\in [M]$, the user privately generates $N$ queries $Q_1^{[m]}, \dots, Q_N^{[m]}$. The queries are independent of the files, i.e.,
\begin{align*}
I(Q_1^{[m]},\dots,Q_N^{[m]};W^{[1]}, \dots, W^{[M]})=0.
\end{align*}

The user sends query $Q_n^{[m]}$ to server $n$ for each
$n\in[N]$. Let $A_n^{[m]}$ denote the truthful answer
prescribed by the protocol for server $n$, which is a
deterministic function of $Q_n^{[m]}$ and its stored
information $(W_n^{[1]},\ldots,W_n^{[M]})$. Thus,
\[
H\bigl(A_n^{[m]}
\mid Q_n^{[m]},W_n^{[1]},\ldots,W_n^{[M]}\bigr)=0,
\qquad \forall n\in[N].
\]

There exists a set $\mathcal{U}\subseteq[N]$ of $U$ robust
(unresponsive) servers that return no answer to the user.
For each $n\in[N]\setminus\mathcal{U}$, let
$\bar{A}_n^{[m]}$ denote the answer actually received by
the user from server $n$.
There also exists a set
$\mathcal{B}\subseteq[N]\setminus\mathcal{U}$ of $B$
Byzantine servers, whose identities are unknown to the user.
For each $n\in\mathcal{B}$, server $n$ may replace its
truthful answer $A_n^{[m]}$ with an arbitrary sequence
of the same prescribed length and send this sequence
to the user as $\bar{A}_n^{[m]}$. Each remaining server $n\in[N]\setminus(\mathcal{U}\cup\mathcal{B})$
is trustworthy and returns the truthful answer prescribed
by the protocol, i.e., $\bar{A}_n^{[m]}=A_n^{[m]}$.

 The user can decode the desired file $W^{[m]}$ from the responses that he receives,
\begin{align*}
 [\mbox{Correctness}] ~    H(W^{[m]} | \bar{A}_{[N]\setminus \mathcal{U}}^{[m]},Q_{1:N}^{[m]}) &= 0, \quad \forall   \mathcal{U}, \;\mathcal{B} \subseteq [N],\; |\mathcal{U}|=U, \;|\mathcal{B}|=B, \;\mathcal{U}\cap \mathcal{B}=\emptyset.
\end{align*}

To satisfy the T-privacy requirement, it must hold that
\begin{align*}
 [\mbox{Privacy}] ~    (Q_{\mathcal{T}}^{[1]},W_{\mathcal{T}}^{[1]},\dots,W_{\mathcal{T}}^{[M]})\sim (Q_{\mathcal{T}}^{[m]},W_{\mathcal{T}}^{[1]},\dots,W_{\mathcal{T}}^{[M]}), \quad \forall m\in [M],\; \forall \mathcal{T}\subseteq [N],\; |\mathcal{T}|=T.
\end{align*}

The retrieval rate is defined
as the ratio of the file size to the worst-case download cost
over all unresponsive sets of size $U$, 
\[
R=\frac{L}{D}
=\frac{L}{
\displaystyle
\max_{\substack{\mathcal{U}\subseteq[N]\\|\mathcal{U}|=U}}
\sum_{n\in[N]\setminus\mathcal{U}}
H(A_n^{[m]})
}.
\]
The PIR capacity is the maximum achievable rate among all available schemes. 

\begin{definition}[Linear,  Full Support-Rank PIR\cite{holzbaur2021toward}]
\label{def:linear-full-support-rank}
A PIR scheme is called linear if the answers  are obtained by taking inner products between query vectors and stored message vectors.
A linear PIR scheme is of full support-rank
if for any set of at most $T$  colluding servers, all nonzero query coefficient vectors they receive  with respect to any single file, are linearly independent. 
\end{definition}

\subsection{Preparations} 
\begin{definition}[GRS codes]
Let $\mathbb{F}_q$ be a finite field. For integers $N$ and $K$ with $1 \le K \le N \le q$, let $\boldsymbol{\alpha} = (\alpha_1, \dots, \alpha_N) \in \mathbb{F}_q^N$ consist of $N$ distinct elements and let $\mathbf{v} = (v_1, \dots, v_N) \in (\mathbb{F}_q^*)^N$ be a vector of $N$ nonzero multipliers (not necessarily distinct). The generalized Reed–Solomon code $\mathrm{GRS}_K(\boldsymbol{\alpha}, \mathbf{v})$ is defined as
\begin{align*}
\mathrm{GRS}_K(\boldsymbol{\alpha}, \mathbf{v})
&= \bigl\{\, \bigl(v_1 f(\alpha_1),\, v_2 f(\alpha_2),\, \dots,\, v_N f(\alpha_N)\bigr) 
\mid f \in \mathbb{F}_q[x],\ \deg f < K \,\bigr\}.
\end{align*}
It is an $[N, K, N-K+1]_q$ MDS code.
\end{definition}

\begin{definition}[Schur Product of Codes]
Let $C, C' \subseteq \mathbb{F}_q^N$ be two linear codes. The Schur product of $C$ and $C'$, denoted $C \star C'$, is the span of all component-wise products of codewords from $C$ and $C'$, i.e.,
\[
C \star C' = \operatorname{span}_{\mathbb{F}_q}\{\, \mathbf{c} \star \mathbf{c}' \mid \mathbf{c} \in C,\ \mathbf{c}' \in C' \,\},
\]
where for $\mathbf{c} = (c_{1}, \dots, c_{N})$ and $\mathbf{c}' = (c_{1}', \dots, c_{N}')$, their component-wise product is given by $\mathbf{c} \star \mathbf{c}' = (c_{1}c_{1}',\dots,c_{N}c_{N}')$. 
\end{definition}

\begin{definition}[Schur-product matrix]\label{def:matrix-schur}
Let $G \in \mathbb{F}_q^{K \times N}$ with rows $\mathbf{g}_1,\dots,\mathbf{g}_K$ and 
$G' \in \mathbb{F}_q^{K' \times N}$ with rows $\mathbf{g}_1',\dots,\mathbf{g}_{K'}'$. 
Their Schur product $G \star G' \in \mathbb{F}_q^{KK' \times N}$ is defined as
\[
G \star G' =
\begin{bmatrix}
\mathbf{g}_1 \star \mathbf{g}_1' \\
\vdots \\
\mathbf{g}_1 \star \mathbf{g}_{K'}' \\
\vdots \\
\mathbf{g}_K \star \mathbf{g}_1' \\
\vdots \\
\mathbf{g}_K \star \mathbf{g}_{K'}'
\end{bmatrix},
\]
where the rows are arranged in lexicographic order $(i,j)$ with $i \in [K], j \in [K']$, and $\star$ denotes component-wise multiplication. 
\end{definition}

\begin{claim}\label{claim}
Let $C, C' \subseteq \mathbb{F}_q^N$ be linear codes with generator matrices 
$G \in \mathbb{F}_q^{K \times N}$ and 
$G'\in \mathbb{F}_q^{K' \times N}$, respectively.
Let $W = (\mathbf{w}_1,\dots,\mathbf{w}_K) \in \mathbb{F}_q^{\ell \times K}$ 
and $V = (\mathbf{v}_1,\dots,\mathbf{v}_{K'}) \in \mathbb{F}_q^{\ell \times K'}$ 
be arbitrary matrices.
Define $\ell$-dimensional column vectors 
$\mathbf{x}_n, \mathbf{y}_n \in \mathbb{F}_q^\ell$ for $n \in [N]$ by
\[
(\mathbf{x}_1, \dots, \mathbf{x}_N)=W G,  \quad 
(\mathbf{y}_1, \dots, \mathbf{y}_N) = V G'.
\]
Then the following hold:
\begin{enumerate}
\item The Schur product code 
$C \star C'=\operatorname{rowsp}( G) \star \operatorname{rowsp}( G') $ 
is generated by the rows of the matrix $G \star G'$ defined in 
Definition~\ref{def:matrix-schur}. 
(Note that $G \star G'$ is not necessarily a generator matrix in the standard 
sense, as its row vectors may be linearly dependent.)
\item The vector of inner products satisfies
\[
\bigl(\langle \mathbf{x}_1, \mathbf{y}_1 \rangle,\;
\dots,\;
\langle \mathbf{x}_N, \mathbf{y}_N \rangle \bigr)
=
\bigl(\mathbf{w}_1^\top V,\;
\dots,\;
\mathbf{w}_K^\top V\bigr)
(G \star G').
\]
Consequently, it is a linear combination of the rows of $G \star G'$, 
and therefore lies in $C \star C'$.
\end{enumerate}
\end{claim}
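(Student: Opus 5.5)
Both parts follow from bilinearity of the component-wise product together with a direct expansion of the inner products.

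For part 1, the two inclusions are checked separately. Each row $\mathbf{g}_i\star\mathbf{g}'_j$ of $G\star G'$ is the product of a codeword of $C$ and a codeword of $C'$, so $\operatorname{rowsp}(G\star G')\subseteq C\star C'$. For the reverse inclusion, note that $C\star C'$ is spanned by products $\mathbf{c}\star\mathbf{c}'$. Write $\mathbf{c}=\sum_i a_i\mathbf{g}_i$ and $\mathbf{c}'=\sum_j b_j\mathbf{g}'_j$. Since $\star$ is bilinear (coordinatewise multiplication distributes over addition and commutes with scalars),
\[
\mathbf{c}\star\mathbf{c}'=\sum_{i\in[K]}\sum_{j\in[K']}a_ib_j\,(\mathbf{g}_i\star\mathbf{g}'_j),
\]
which lies in $\operatorname{rowsp}(G\star G')$. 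A span of elements of a subspace stays in that subspace, so equality holds.

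For part 2, I will compute the $n$-th coordinate directly. Writing $G=(g_{i,n})$ and $G'=(g'_{j,n})$, we have $\mathbf{x}_n=\sum_i g_{i,n}\mathbf{w}_i$ and $\mathbf{y}_n=\sum_j g'_{j,n}\mathbf{v}_j$. Hence
\[
\langle\mathbf{x}_n,\mathbf{y}_n\rangle=\sum_{i\in[K]}\sum_{j\in[K']}(\mathbf{w}_i^\top\mathbf{v}_j)\,g_{i,n}g'_{j,n}.
\]
On the right-hand side of the claimed identity, the row vector $(\mathbf{w}_1^\top V,\dots,\mathbf{w}_K^\top V)$ has length $KK'$. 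Its entry in position $(i,j)$, under the lexicographic ordering of Definition~\ref{def:matrix-schur}, is $\mathbf{w}_i^\top\mathbf{v}_j$. The $n$-th column of $G\star G'$ has entry $g_{i,n}g'_{j,n}$ in row $(i,j)$. Multiplying gives exactly the sum above, coordinate by coordinate.

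The ``consequently'' statement then follows because the right-hand side is a vector times $G\star G'$, so it lies in the row space of $G\star G'$, which equals $C\star C'$ by part 1.

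The main obstacle is only bookkeeping: checking that the index ordering of the row vector $(\mathbf{w}_1^\top V,\dots,\mathbf{w}_K^\top V)$ matches the lexicographic row ordering of $G\star G'$. This holds because the block for each $i$ lists $j=1,\dots,K'$ consecutively.
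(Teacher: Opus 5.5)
Your proposal is correct and follows essentially the same route as the paper: part 1 via bilinearity of the Schur product, and part 2 by expanding $\mathbf{x}_n=\sum_i G_{i,n}\mathbf{w}_i$ and $\mathbf{y}_n=\sum_j G'_{j,n}\mathbf{v}_j$ and matching $\sum_{i,j}\langle\mathbf{w}_i,\mathbf{v}_j\rangle G_{i,n}G'_{j,n}$ with the $n$-th coordinate of $(\mathbf{w}_1^\top V,\dots,\mathbf{w}_K^\top V)(G\star G')$. Your explicit two-inclusion argument for part 1 and your check of the lexicographic index ordering simply spell out details the paper calls immediate.
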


\begin{IEEEproof}
The first statement is immediate from the definition of $G \star G'$ and the 
bilinearity of the Schur product: its rows are all $\mathbf{g}_k \star \mathbf{g}_{k'}'$, 
which span $C \star C'$.

For the second statement, for each $n\in[N]$, write
\[
\mathbf{x}_n
=
\sum_{i=1}^K G_{i,n}\mathbf{w}_i,
\qquad
\mathbf{y}_n
=
\sum_{j=1}^{K'} G'_{j,n}\mathbf{v}_j,
\]
where $G_{i,n}$ and $G'_{j,n}$ are the $(i,n)$-th and $(j,n)$-th entries 
of $G$ and $G'$, respectively. Then
\[
\langle\mathbf{x}_n,\mathbf{y}_n\rangle
=
\sum_{i=1}^K\sum_{j=1}^{K'}
\langle \mathbf{w}_i,\mathbf{v}_j\rangle
G_{i,n}G'_{j,n}.
\]
This is exactly the $n$-th entry of
$\bigl(\mathbf{w}_1^\top V,\dots,\mathbf{w}_K^\top V\bigr)
(G \star G')$. Hence the inner product vector is a linear combination of 
the rows of $G \star G'$. By the first statement, these rows generate 
$C \star C'$, so the inner product vector lies in $C \star C'$.
\end{IEEEproof}

\begin{definition}[Uniform Column Properties]
\label{def:uniform}
Let $D = [\,D_1 \;\big|\; D_2 \;\big|\; \cdots \;\big|\; D_a\,]$ be an $r\times aN$ matrix with each $D_i \in \mathbb{F}_q^{r\times N}$.
For an index set $\mathcal{L} \subseteq [N]$, denote its complement by $\mathcal{L}^c := [N]\setminus \mathcal{L}$, and define
\begin{align*}
D_{\mathcal{L}}& := [\,D_1(:,\mathcal{L}) \;\big|\; D_2(:,\mathcal{L}) \;\big|\; \cdots \;\big|\; D_a(:,\mathcal{L})\,],\\
D_{\mathcal{L}^c}& := [\,D_1(:,\mathcal{L}^c) \;\big|\; D_2(:,\mathcal{L}^c) \;\big|\; \cdots \;\big|\; D_a(:,\mathcal{L}^c)\,].
\end{align*}
For an integer $\ell$, we say that $D$ satisfies:
\begin{enumerate}
\item[(i)] the \emph{$(\ell,a)$-uniform selection full-rank property} if 
\[ \forall  \mathcal{L}\subseteq [N],\ |\mathcal{L}|=\ell, \quad
\operatorname{rank}(D_{\mathcal{L}}) = a \ell;
\]
\item[(ii)] the \emph{$(\ell,a)$-uniform puncturing MDS property} if 
\[\forall \mathcal{L}\subseteq [N],\ |\mathcal{L}|=\ell, \quad D_{\mathcal{L}^c} \text{ is column MDS}.\]
\end{enumerate}
\end{definition}

\begin{remark} Here, ``uniform'' means that the same set of column indices is selected or deleted from every block $D_i$.
Definition~\ref{def:uniform} states two properties. 
The first, the $(\ell,a)$-uniform selection full-rank property, means that uniformly selecting the same $\ell$ columns from every $N$-column block of $D$ yields an $r \times a\ell$ matrix of full column rank. 
The second, the $(\ell,a)$-uniform puncturing MDS property, means that uniformly puncturing the same $\ell$ columns from every $N$-column block of $D$ yields an $r \times a(N-\ell)$ matrix which is column-MDS. Note that $G\star G'$ and $G'\star G$ differ only by a row permutation. Since row permutations do not affect column properties, the two matrices satisfy the same uniform column properties.
\end{remark}

The following lemma restates, in transposed form,
\cite[Lemma~2]{yao2021capacity},
which generalizes the square-matrix result of Sun and Jafar
\cite[Lemma~1]{sun_22017capacity}
to full-rank rectangular matrices.

\begin{lemma}\label{lem:dim}
Let $S_1,S_2,\dots,S_M\in\mathbb{F}_q^{a\times a}$ be $M$ random
matrices, chosen independently and uniformly from all $a\times a$
full-rank matrices over $\mathbb{F}_q$.
Let $H_1,H_2,\dots,H_M\in\mathbb{F}_q^{b\times c}$ be $M$ fixed
matrices of full column rank, where $c\leq b\leq a$.
Let $\mathcal{I}_1,\mathcal{I}_2,\dots,\mathcal{I}_M$ be $M$
index vectors, each containing $b$ distinct indices from $[1:a]$.
Then
\begin{align*}
\left(
S_1(:,\mathcal{I}_1)H_1,\,
S_2(:,\mathcal{I}_2)H_2,\,
\dots,\,
S_M(:,\mathcal{I}_M)H_M
\right)
\sim
\left(
S_1(:,1:c),\,
S_2(:,1:c),\,
\dots,\,
S_M(:,1:c)
\right).
\end{align*}
\end{lemma}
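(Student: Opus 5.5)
Since the $M$ pairs $(S_m,\mathcal{I}_m,H_m)$ involve independent matrices $S_m$, and each component on the left depends only on its own $S_m$ (the $H_m$ and $\mathcal{I}_m$ being fixed), the joint distribution on each side is the product of the marginals. It therefore suffices to prove the single-matrix statement: if $S$ is uniform over the invertible $a\times a$ matrices, $\mathcal{I}$ is a list of $b$ distinct indices, and $H\in\mathbb{F}_q^{b\times c}$ has full column rank, then $S(:,\mathcal{I})H\sim S(:,1:c)$.

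For the single-matrix claim I will write $S(:,\mathcal{I})H = S\,E$, where $E\in\mathbb{F}_q^{a\times c}$ is the matrix whose rows indexed by $\mathcal{I}$ form $H$ and whose other rows are zero. Since $H$ has full column rank and the rows of $E$ outside $\mathcal{I}$ vanish, $E$ also has full column rank $c$. Extend $E$ to an invertible matrix $P=[E\mid F]\in\mathbb{F}_q^{a\times a}$ by adjoining $a-c$ further columns; this is possible because $c\le b\le a$ and the columns of $E$ are linearly independent. Then $SE=(SP)(:,1:c)$.

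The key step is invariance of the uniform measure: right multiplication by the fixed invertible $P$ is a bijection of $\mathrm{GL}_a(\mathbb{F}_q)$ onto itself, so $SP$ is again uniform on $\mathrm{GL}_a(\mathbb{F}_q)$. Consequently $(SP)(:,1:c)\sim S(:,1:c)$, which is the single-matrix claim. Applying this independently for each $m$ and taking the product of the identical marginals yields the stated joint equality in distribution.

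The main obstacle is to justify the factorization step carefully. The joint law factorizes because the $S_m$ are independent and every map involved, $S\mapsto S(:,\mathcal{I}_m)H_m$, is deterministic. The other point to check is that extending $E$ to an invertible $P$ really uses only that $E$ has full column rank, and this holds regardless of which indices $\mathcal{I}$ selects. Neither step is deep; the proof reduces to the left-invariance of the uniform distribution on $\mathrm{GL}_a(\mathbb{F}_q)$ under a fixed change of basis.
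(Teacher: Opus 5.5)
Your proof is correct: writing $S(:,\mathcal{I})H=SE$, completing $E$ to an invertible $P$, and using that $S\mapsto SP$ is a bijection of $\mathrm{GL}_a(\mathbb{F}_q)$ gives the single-matrix claim, and the independence of the $S_m$ then gives the joint statement. This is right multiplication rather than ``left-invariance,'' but the uniform measure on a finite group is invariant under both. The paper gives no proof of its own and simply cites \cite[Lemma~2]{yao2021capacity} (generalizing \cite[Lemma~1]{sun_22017capacity}); your argument is the standard completion-plus-invariance argument one would use for that result.
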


\begin{lemma}[Error-correcting capability of MDS codes \cite{roth2006introduction}]
\label{lem:error}
Let $C$ be an $[N,K]$ MDS code over $\mathbb{F}_q$. Then $C$ can correct any error pattern of weight at most 
\[
\tau := \left\lfloor \frac{N-K}{2} \right\rfloor.
\]
Equivalently, the transmitted codeword can be uniquely recovered whenever the number of errors is at most $\tau$.
\end{lemma}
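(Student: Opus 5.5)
This follows the standard bounded-distance decoding argument for MDS codes, which rests on the minimum distance $d=N-K+1$ given by the Singleton bound with equality. The plan is to establish $d=N-K+1$, show that Hamming balls of radius $\tau$ around distinct codewords are disjoint, and conclude that nearest-codeword decoding succeeds.

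For the minimum distance, suppose some nonzero codeword $\mathbf{c}\in C$ had weight at most $N-K$. Then $\mathbf{c}$ would vanish on a set $\mathcal{K}$ of at least $K$ coordinates. By the MDS property, the restriction map $C\to\mathbb{F}_q^{\mathcal{K}}$ is injective for any $|\mathcal{K}|=K$; equivalently, any $K$ columns of a generator matrix are linearly independent. Injectivity forces $\mathbf{c}=\mathbf{0}$, a contradiction. Hence every nonzero codeword has weight at least $N-K+1$, so $d\ge N-K+1$. The Singleton bound gives the reverse inequality, so $d=N-K+1$.

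Now let the transmitted codeword be $\mathbf{c}$ and the received word be $\mathbf{y}=\mathbf{c}+\mathbf{e}$ with $\operatorname{wt}(\mathbf{e})\le\tau$. Suppose another codeword $\mathbf{c}'\neq\mathbf{c}$ also satisfies $d_H(\mathbf{y},\mathbf{c}')\le\tau$. The triangle inequality gives
\[
d_H(\mathbf{c},\mathbf{c}')\le 2\tau\le N-K<d,
\]
which is a contradiction. Therefore $\mathbf{c}$ is the unique codeword within distance $\tau$ of $\mathbf{y}$, and minimum-distance decoding returns it. This proves the equivalence stated in the lemma.

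The only step with any content is converting the MDS property into the distance bound; the rest is the routine triangle-inequality argument. For the GRS codes used in this paper, the bound can also be seen directly: a nonzero polynomial of degree less than $K$ has fewer than $K$ roots, so every nonzero codeword has at least $N-K+1$ nonzero entries. In later use this lemma will be applied column-wise to answer vectors lying in a Schur product code $C\star C'$ of dimension $K+T-1$, with $B$ Byzantine errors and $U$ erasures. There one punctures the $U$ erased coordinates to get an MDS code of length $N-U$, and needs $2B\le N-U-(K+T-1)$.
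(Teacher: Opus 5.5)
Your proof is correct. It is the standard argument (the MDS property gives $d=N-K+1$, so Hamming balls of radius $\tau$ around distinct codewords are disjoint), which is the textbook result the paper cites from Roth without giving a proof. One minor correction to your closing aside: the paper does not decode in a $(K+T-1)$-dimensional Schur product code under the condition $2B\le N-U-(K+T-1)$. Instead it applies the lemma to the row space of the uniformly punctured matrix $\mathbf{G}_{\langle\delta\rangle}\star\mathbf{H}$, which is a $[\delta(N-U),\delta(N-U-2B)]$ MDS code with $\tau=\delta B$. This is exactly the number of symbols the $B$ Byzantine servers can corrupt. The same holds with $\gamma$ in place of $\delta$ for the first $\gamma N$ columns of $\mathbf{X}\star\mathbf{F}_{\langle\gamma+\beta\rangle}$.
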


\section{The MDS-TPIR scheme with robust and Byzantine servers}\label{Sec:main}

The main result of this paper is summarized in the following theorem.

\begin{theorem}\label{theorem:U-B}
    When $N-U-2B\geq KT$,  the capacity of linear, full support-rank $(M,N,T,K)$-PIR  with $U$ robust  and $B$ Byzantine servers   is   $R=\frac{N-U-2B}{N-U}(1+\frac{K+T-1}{N-U-2B}+\cdots+(\frac{K+T-1}{N-U-2B})^{M-1})^{-1}$. When $K=1$ or $T=1$, the same expression gives the
capacity without assuming linearity or full support-rank.
\end{theorem}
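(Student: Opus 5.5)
The proof has two parts: an achievable scheme and a matching converse. Throughout, write $N' = N-U$ and $\tilde N = N-U-2B$.

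\textbf{Achievability.} I will adapt the star-product scheme of \cite{freij2017private,holzbaur2021toward,sun2026refinement} to the Byzantine setting. Each file is stored with $\mathrm{GRS}_K(\boldsymbol{\alpha},\mathbf{v})$. The queries for the $T$-privacy masks are drawn from a $\mathrm{GRS}_T$ code on the same evaluation points. By Claim~\ref{claim}, each round's answer vector from all $N$ servers lies in $C\star C'$, which is a $\mathrm{GRS}_{K+T-1}$ code: it has dimension $K+T-1$ and minimum distance $N-K-T+2$.

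The desired symbols are embedded in the positions that remain free after the interference. Each round, the user receives $N'$ coordinates, of which up to $B$ may be corrupted. The received word is therefore a codeword of the $[N',K+T-1]$ punctured code plus desired-symbol pieces plus at most $B$ errors.

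To make the interference decodable, I will follow the Sun--Jafar layered structure of \cite{sun_22017capacity,sun2026refinement}, with $N$ replaced by the effective count $\tilde N$. At each level, $\tilde N-(K+T-1)$ out of every $\tilde N$ ``useful'' dimensions carry new information. The remaining $2B$ downloaded dimensions per block act as redundancy, so that Lemma~\ref{lem:error} corrects $B$ errors in an $[N',\tilde N]$ MDS code. This redundancy is what produces the prefactor $\tilde N/N'$.

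Concretely, the desired symbols are encoded together with the side-information sums so that the overall downloaded vector from non-robust servers lies in an $[N',\tilde N]$ MDS code. This requires the uniform puncturing MDS property and the uniform selection full-rank property of Definition~\ref{def:uniform} for $\mathbf{G}_{\langle a\rangle}\star\mathbf{G}'$. I will verify these properties via a Vandermonde argument; they hold precisely when $\tilde N\ge KT$.

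Privacy follows from Lemma~\ref{lem:dim}. Random full-rank mixing matrices $S_m$ make every $T$-set of queries identically distributed regardless of the desired index, because the $T$ query coefficient vectors per file are uniform and independent by the GRS$_T$ masking. The rate then computes to $\frac{\tilde N}{N'}\bigl(\sum_{i<M}\rho^i\bigr)^{-1}$ with $\rho=\frac{K+T-1}{\tilde N}$.

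\textbf{Converse.} Following \cite{Byzantinebanawan2018capacity}, I will first fix any $U$ unresponsive servers. I will then show that any $2B$ of the remaining $N'$ answers are ``wasted'': by a Singleton-type argument, correctness against $B$ Byzantine servers implies that the answers from any $N'-2B=\tilde N$ of the responding servers must already determine $W^{[m]}$. Otherwise, two file/answer realizations that agree on those $\tilde N$ servers could be confused by splitting the $2B$ remaining servers into two sets of $B$ corrupted servers each.

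Restricted to $\tilde N$ servers, the problem becomes an MDS-TPIR problem, whose linear full-support-rank converse \cite{holzbaur2021toward} (and, for $K=1$ or $T=1$, the general converses \cite{sun_22017capacity,banawan2018capacity}) bounds the download from those servers by $L\sum_{i<M}\rho^i$. Symmetrizing over choices of the $\tilde N$-subset, and using that the download counts over all $N'$ servers, gives $D\ge \frac{N'}{\tilde N}L\sum_i\rho^i$.

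\textbf{Main obstacle.} I expect the hardest step to be the converse symmetrization. The answer sizes $H(A_n)$ may be unequal across servers, so I would average over all $\binom{N'}{\tilde N}$ subsets, exploiting that the MDS-TPIR bound holds for each subset in a form linear in the individual answer entropies. Making this averaging argument rigorous is the main difficulty.
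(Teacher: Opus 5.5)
\textbf{The achievability part has a genuine gap.} You draw the privacy masks from $C'=\mathrm{GRS}_T$ and let the desired symbols occupy ``the positions left free'' by the $C\star C'$ interference. You then say the scheme needs the uniform puncturing-MDS and uniform selection properties of $\mathbf{G}_{\langle a\rangle}\star\mathbf{G}'$. Those properties cannot give what you need. $\operatorname{rowsp}(\mathbf{G}\star\mathbf{G}')=C\star C'$ has dimension only $K+T-1<\tilde N$: the matrix $\mathbf{G}\star\mathbf{G}'$ has $KT$ rows but rank $K+T-1$, so it is not column-MDS when $K,T\ge 2$. Repeating it over blocks produces identical columns, so the uniform selection property fails for $a\ge 2$. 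Hence nothing of this form can be the $[N',\tilde N]$ MDS word, carrying $\tilde N$ fresh, $B$-error-correctable desired dimensions per $N'$ downloads, that your rate count requires. If instead the desired pieces are added uncoded on top of the interference word, as in the star-product scheme, you do not explain how the piece returned by a Byzantine server is corrected while keeping the layering and privacy.

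The missing idea is the paper's asymmetric query design. The desired file is queried through a separate matrix $\mathbf{H}$ with $\delta\tilde N/K$ rows, chosen so that:
\begin{enumerate}
\item[(i)] $\mathbf{H}$ has the $(T,\delta)$-uniform selection full-rank property. This is possible only because $\delta\tilde N/K\ge\delta T$, and it (not a Vandermonde condition) is where $\tilde N\ge KT$ enters.
\item[(ii)] $\mathbf{G}_{\langle\delta\rangle}\star\mathbf{H}$ is $\delta\tilde N\times\delta N$ with the $(U,\delta)$-uniform puncturing MDS property. After interference cancellation, the $\delta N'$ desired downloads then form a codeword that corrects $\delta B$ errors.
\end{enumerate}
Only the undesired files carry the $\mathrm{GRS}_T$ structure, via $\mathbf{H}'=\mathbf{X}\star\mathbf{G}'_{\langle\gamma+\beta\rangle}$. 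By Lemma~\ref{lem:H'-properties}, their answers lie in $\operatorname{rowsp}(\mathbf{X}\star\mathbf{F}_{\langle\gamma+\beta\rangle})$, of dimension $\gamma\tilde N=(\gamma+\beta)(K+T-1)$. $\mathbf{X}$ is chosen so that the $\gamma N'$ downloads per block that do not involve the desired file (with at most $\gamma B$ errors) determine the $\beta N$ interference terms in the mixed sums.

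Privacy then comes not from GRS$_T$ masking but from Lemma~\ref{lem:dim}. Both $\overline{\mathbf{H}}$ and $\overline{\mathbf{H}}'$ have the $(T,\delta(\gamma+\beta)^{M-1})$-uniform selection full-rank property; for $\mathbf{H}'$ this uses the $(1,\gamma+\beta)$ property of $\mathbf{X}$ and the invertibility of $\mathbf{G}'(:,\mathcal{T})$. So for every file, any $T$ servers see the first $\delta(\gamma+\beta)^{M-1}T$ columns of a uniform invertible matrix. Existence of $\mathbf{H}$ and $\mathbf{X}$ follows from Schwartz--Zippel over a large field. An explicit $0/1$ assignment, grouping $K$ distinct columns of $\mathbf{G}$ per row of $\mathbf{H}$, shows that each determinant polynomial is nonzero.

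\textbf{Your converse is the paper's argument.} Your Singleton-type splitting of the $2B$ leftover servers is Lemma~\ref{lem:UB-decoding}. The restriction to $\tilde N$ servers stays linear and of full support-rank, so the cited converses apply. The averaging you expected to be the main obstacle is routine. The restricted scheme's download is exactly $\sum_{n\in\mathcal I}H(A_n^{[m]})$, so each subset gives $\sum_{n\in\mathcal I}H(A_n^{[m]})\ge L/C_{\mathrm{MDS\text{-}TPIR}}(\tilde N)$ directly. Summing over all $\binom{N'}{\tilde N}$ subsets, with each server counted $\binom{N'-1}{\tilde N-1}$ times, yields $D\ge\frac{N'}{\tilde N}\cdot L/C_{\mathrm{MDS\text{-}TPIR}}(\tilde N)$, with no assumption that answer sizes are equal.
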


We now prove the theorem by presenting an achievable scheme in this section. 
The converse bound will be proved in Section \ref{sec:converse}. Before presenting the general scheme, we provide an illustrative example with small parameters to demonstrate the main ideas.

\subsection{An illustrative  example for $(U,B,M, N,T,K)=(1,1,2,7,1,2)$}
Assume $W^{[1]}$ is the desired file.
Each file is comprised of $L=8$ independent symbols from $\mathbb{F}_{7}$:
\begin{align*}
    W^{[1]}=({\bf a}_1, {\bf a}_2),\quad W^{[2]}=({\bf b}_1, {\bf b}_2), 
\end{align*}
where ${\bf a}_1, {\bf a}_2,{\bf b}_1, {\bf b}_2\in \mathbb{F}_{7}^{4\times1}$. Each file is stored in the system via an MDS code with generator matrix $\mathbf{G}$. Let 
\begin{align*}
    {\bf G}=\left(\begin{array}{ccccccc}
        1 &1&1&1&1&1&1  \\
         0&1&2&3&4 &5&6
    \end{array}\right)_{2\times 7}.
\end{align*}
The storage is specified as
\begin{equation*}
 (W^{[m]}_{1},\dots, W^{[m]}_{7})=W^{[m]}{\bf G},
\end{equation*}
where $W^{[m]}_{n}\in \mathbb{F}_{7}^{4\times1}$ for $m\in[2],n\in[7]$.

The user privately chooses two random matrices $S^{[1]}$, $S^{[2]}$, independently and uniformly from the set of all full-rank $4\times 4$ matrices over $\mathbb{F}_{7}$.  
Next, we need two matrices ${\mathbf{H}}$ and ${\mathbf{H}}'$ to generate the query vectors, where ${\mathbf{H}}$ is used for the desired file and ${\mathbf{H}}'$ is used for the undesired file. 

Let ${\mathbf{H}}$ be a $2\times 7$ matrix over $\mathbb{F}_7$ satisfying the $(1,1)$-uniform selection full-rank property, and let $\mathbf{G}\star{\mathbf{H}}$ satisfy the $(1,1)$-uniform puncturing MDS property. Define $\overline{\mathbf{H}}\triangleq\operatorname{diag}(\mathbf{H},\mathbf{H})$. Therefore, $\overline{\mathbf{H}}$ is a $4\times 14$ matrix satisfying the $(1,2)$-uniform selection full-rank property.  Let $\mathbf{H}'$ be a $2\times 14$ matrix over $\mathbb{F}_7$ satisfying the $(1,2)$-uniform selection full-rank property, and let the first $7$ columns of $\mathbf{G}_{\langle 2 \rangle}\star\mathbf{H}'$ satisfy the $(1,1)$-uniform puncturing MDS property. The column properties of $\overline{\mathbf{H}}$ and $\mathbf{H}'$ are used to guarantee privacy, whereas those of $\mathbf{G}\star{\mathbf{H}}$ and $\mathbf{G}_{\langle 2\rangle}\star\mathbf{H}'$ are used to guarantee the correctness of the scheme. We will provide matrices $\mathbf{H}$ and $\mathbf{H}'$ satisfying these conditions at the end of this example.

Define query vectors: 
\begin{align*}
\left({\mathbf q}_{1,1}^{[1]},\dots, {\mathbf q}_{1,7}^{[1]},{\mathbf q}_{2,1}^{[1]},\dots, {\mathbf q}_{2,7}^{[1]}\right) &= S^{[1]} \cdot{\overline{\mathbf{H}}},\\
\left({\mathbf q}_{1,1}^{[2]},\dots, {\mathbf q}_{1,7}^{[2]},{\mathbf q}_{2,1}^{[2]},\dots, {\mathbf q}_{2,7}^{[2]}\right) &= S^{[2]}(:,1:2) \cdot{{\mathbf{H}}'}.
\end{align*}
For each $m\in[2]$ and $n\in[7]$, the two vectors $\mathbf{q}_{1,n}^{[m]}, \mathbf{q}_{2,n}^{[m]}\in \mathbb{F}_{7}^{4\times 1}$ are the query vectors generated by the user for server $n$ and file $W^{[m]}$.
The query structure is shown in Table \ref{table}.
Upon receiving the query vectors $\mathbf{q}_{1,n}^{[1]}, ~\mathbf{q}_{1,n}^{[2]}, ~\mathbf{q}_{2,n}^{[1]}+\mathbf{q}_{2,n}^{[2]}$ (this is a formal 
sum and does not represent an actual summation of the query vectors), server $n$ computes inner products $\langle\mathbf{q}_{1,n}^{[1]}, W_n^{[1]}\rangle$, $\langle\mathbf{q}_{1,n}^{[2]}, W_n^{[2]}\rangle$,  and $\langle\mathbf{q}_{2,n}^{[1]}, W_n^{[1]}\rangle+\langle\mathbf{q}_{2,n}^{[2]}, W_n^{[2]}\rangle$.

\begin{table}[t]
\centering
\renewcommand{\arraystretch}{1.3}  
\caption{Query structure.}
\label{table}
\begin{tabular}{|c| c| c| c| c| c| c|}
\hline
Server $1$ & Server $2$ & Server $3$ & Server $4$ & Server $5$ & Server $6$ &Server $7$\\
\hline
$\mathbf{q}_{1,1}^{[1]}$ & $\mathbf{q}_{1,2}^{[1]}$ & $\mathbf{q}_{1,3}^{[1]}$ & $\mathbf{q}_{1,4}^{[1]}$ & $\mathbf{q}_{1,5}^{[1]}$ & $\mathbf{q}_{1,6}^{[1]}$& $\mathbf{q}_{1,7}^{[1]}$\\
$\mathbf{q}_{1,1}^{[2]}$ & $\mathbf{q}_{1,2}^{[2]}$ & $\mathbf{q}_{1,3}^{[2]}$ & $\mathbf{q}_{1,4}^{[2]}$ & $\mathbf{q}_{1,5}^{[2]}$ & $\mathbf{q}_{1,6}^{[2]}$& $\mathbf{q}_{1,7}^{[2]}$\\ 
$\mathbf{q}_{2,1}^{[1]}+\mathbf{q}_{2,1}^{[2]}$ & $\mathbf{q}_{2,2}^{[1]}+\mathbf{q}_{2,2}^{[2]}$ & $\mathbf{q}_{2,3}^{[1]}+\mathbf{q}_{2,3}^{[2]}$ & $\mathbf{q}_{2,4}^{[1]}+\mathbf{q}_{2,4}^{[2]}$ & $\mathbf{q}_{2,5}^{[1]}+\mathbf{q}_{2,5}^{[2]}$ & $\mathbf{q}_{2,6}^{[1]}+\mathbf{q}_{2,6}^{[2]}$& $\mathbf{q}_{2,7}^{[1]}+\mathbf{q}_{2,7}^{[2]}$\\
\hline
\end{tabular}
\end{table}

We now briefly analyze how the queries guarantee privacy.
Since both  $\overline{\mathbf{H}}$ and $\mathbf{H}'$ satisfy the $(1,2)$-uniform selection full-rank property, i.e., each of them is divided into two $7$-column blocks, selecting the column at the same position from both blocks yields a matrix of full column rank. Moreover, since $S^{[1]}$ and $S^{[2]}$ are independent and uniformly distributed invertible matrices, by Lemma~\ref{lem:dim}, any server observes that the queries it receives regarding the desired file and the undesired file are $2$ uniformly random linearly independent vectors in $\mathbb{F}_7^{4}$. Consequently, the server cannot distinguish which file is being retrieved.

We now turn to the correctness of the  scheme.
For the  undesired file $W^{[2]}$, by   Claim \ref{claim}, the $14$ queried undesired symbols satisfy the following relation:
\begin{align}\label{examp1:undesired}
\left(\langle \mathbf{q}_{1,1}^{[2]}, W^{[2]}_{1} \rangle, \dots,
        \langle \mathbf{q}_{1,7}^{[2]}, W^{[2]}_{7} \rangle, 
\langle \mathbf{q}_{2,1}^{[2]}, W^{[2]}_{1} \rangle, \dots,
        \langle \mathbf{q}_{2,7}^{[2]}, W^{[2]}_{7} \rangle \right) 
        = \left( \mathbf{b}_{1}^\top S^{[2]}(:,1:2) ,  \mathbf{b}_{2}^\top S^{[2]}(:,1:2) \right) 
        \cdot (\mathbf{G}_{\langle 2\rangle} \star \mathbf{H}').
\end{align}
Specifically, the first $7$ coordinates in \eqref{examp1:undesired} correspond to the one individually downloaded interference layer
$\mathbf{q}^{[2]}_{1,n}$, while the last seven coordinates correspond to the second interference layer
$\mathbf{q}^{[2]}_{2,n}$, which is not downloaded individually but appears only in the mixed sums with the desired symbols.
Because one server is unresponsive, the user actually downloads only $6$ symbols from the set of the first $7$ symbols in \eqref{examp1:undesired}. Since the first $7$ columns of $\mathbf{G}_{\langle 2\rangle}\star \mathbf{H}'$  satisfy the $(1,1)$-uniform puncturing MDS property,  puncturing any $U=1$ column  of $(\mathbf{G}_{\langle 2 \rangle}\star \mathbf{H}')(:,1:7)$ yields a $4\times 6$ column-MDS matrix, which  can correct one error by Lemma \ref{lem:error}. The downloaded $6$ symbols from  the first seven symbols in \eqref{examp1:undesired} contain at most one error due to one Byzantine server. Thus,   $( \mathbf{b}_{1}^\top S^{[2]}(:,1:2), \mathbf{b}_{2}^\top S^{[2]}(:,1:2) )$ can be recovered, and consequently the seven symbols $\{\langle \mathbf{q}_{2,1}^{[2]}, W^{[2]}_{1} \rangle, \dots, \langle \mathbf{q}_{2,7}^{[2]}, W^{[2]}_{7} \rangle\}$ can be computed. The user can then eliminate the interferences from the undesired file in the received  pairwise summations  of the form $\langle\mathbf{q}_{2,n}^{[1]}, W_n^{[1]}\rangle+\langle\mathbf{q}_{2,n}^{[2]},W_n^{[2]}\rangle$ and finally recover all truthful desired symbols from responsive and non-Byzantine
servers and the errors remain confined to the $B=1$ Byzantine server.
For the desired file $W^{[1]}$ and each $\ell\in [2]$, the  queried desired symbols satisfy the following relation:
\begin{align*}
\left(\langle \mathbf{q}_{\ell,1}^{[1]}, W^{[1]}_{1} \rangle, \dots,
        \langle \mathbf{q}_{\ell,7}^{[1]}, W^{[1]}_{7} \rangle
\right) 
        = \left( \mathbf{a}_{1}^\top S^{[1]}(:,2\ell-1:2\ell) ,  \mathbf{a}_{2}^\top S^{[1]}(:,2\ell-1:2\ell) \right) 
        \cdot \left(\mathbf{G}\star \mathbf{H}\right).
\end{align*}
Since $\mathbf{G}\star \mathbf{H}$ satisfies the $(1,1)$-uniform puncturing MDS property, puncturing any one column of  $\mathbf{G}\star \mathbf{H}$ yields
a 4 × 6  column-MDS matrix. Thus, even in the presence of one unresponsive server and one Byzantine server, the vector $( \mathbf{a}_{1}^\top S^{[1]}(:,2\ell-1:2\ell) ,  \mathbf{a}_{2}^\top S^{[1]}(:,2\ell-1:2\ell) )$  can be recovered from received $6$ desired symbols among  $\{\langle \mathbf{q}_{\ell,1}^{[1]}, W^{[1]}_{1} \rangle, \dots, \langle \mathbf{q}_{\ell,7}^{[1]}, W^{[1]}_{7} \rangle\}$. Repeating the process for all $\ell \in [2]$ yields the concatenated 
vectors $\mathbf{a}_{1}^{\top} S^{[1]}$ and $\mathbf{a}_{2}^{\top} S^{[1]}$. Moreover, since $S^{[1]}$ is full rank, the entire file $W^{[1]}$ is retrieved once all $12$ downloaded desired symbols are collected. Therefore, the scheme is correct. The PIR rate is $R=\frac{8}{18}=\frac{4}{9}$.

For this example, we specify the matrices ${\mathbf{H}}$ and $\mathbf{H}'$ satisfying the required column properties as follows. 
\begin{align*}
    {\bf H}=&\left(\begin{array}{ccccccc}
     1 &1&1&1&1&1&1  \\
    0&1&4&2&2 &4&1
    \end{array}\right)_{2\times 7},\\
    {\bf H}'=&\left(\begin{array}{cccccccccccccc}
1&1&1&1&1&1&1&0&0&0&0&0&0&0  \\
0&1&4&2&2&4&1&1&1&1&1&1&1&1
\end{array}\right)_{2\times 14}.
\end{align*}

\subsection{General achievable scheme}
Assume $N-U-2B \geq KT$. Let $\delta$ be the smallest positive integer such that
$K$ divides $\delta(N-U-2B)$, i.e.,
$\delta=\frac{K}{\gcd(K,N-U-2B)}$.
Note that $KT\geq K+T-1$. When $N-U-2B>K+T-1$, let $\gamma$ and $\beta$ be the smallest positive integers
such that
$
(\gamma+\beta)(K+T-1)=\gamma(N-U-2B)$.
In the boundary case $N-U-2B=K+T-1$, the assumption
$N-U-2B\ge KT$ implies that $K=1$ or $T=1$, and hence $\delta=1$.
We set   $\gamma=1$ and $\beta=0$ in this case.

Assume that each file consists of $L=\delta (\gamma+\beta)^{M-1}(N-U-2B)$ symbols from a sufficiently large finite field $\mathbb{F}_q$. Consider $M$ independent files $W^{[1]}, \dots, W^{[M]}$, where each file 
is an $\frac{L}{K} \times K$ matrix whose entries are i.i.d. uniform over $\mathbb{F}_q$.  Each file is stored in the system via the same $[N,K]$ GRS storage code $\mathrm{GRS}_K(\boldsymbol{\alpha}, \mathbf{v})$ with generator matrix ${\bf G}_{K\times N}$, i.e.,
\begin{align*}
(W^{[m]}_{1},\dots, W^{[m]}_{N})=W^{[m]}{\bf G}, 
\end{align*}
where $W^{[m]}_{n}\in \mathbb{F}_q^{\frac{L}{K}\times1}$ for $m\in[M],n\in[N]$.
\begin{itemize}
 \item Without loss of generality, let $W^{[1]}$ be the desired file. The user privately chooses  $M$ random matrices, $S^{[1]},\dots,S^{[M]}$, independently and uniformly from the set of all full rank $\frac{L}{K}\times \frac{L}{K}$ matrices over $\mathbb{F}_q$.
\item Let ${{\mathbf{H}}}$ be a $\frac{\delta(N-U-2B)}{K} \times \delta N$  matrix  satisfying the $(T,\delta)$-uniform selection full-rank property   over $\mathbb{F}_q$ such that $ {{\mathbf H}} \star {{\mathbf{G}}}_{\langle \delta \rangle}$ is a  $\delta(N-U-2B)\times \delta N$  matrix  satisfying the $(U,\delta)$-uniform puncturing MDS property.
Note that the $(T,\delta)$-uniform selection full-rank property
requires the $\delta T$ selected columns of $\mathbf{H}$ to
be linearly independent. Since $\mathbf{H}$ has
$\delta(N-U-2B)/K$ rows, this requires $
\frac{\delta(N-U-2B)}{K} \ge \delta T$,
which explains the assumption $N-U-2B \ge KT$ in the proposed
construction.

\item Let $\mathbf{G}'$ be the generator matrix of a GRS code of dimension $T$ that shares the same evaluation points as the storage GRS code. By \cite[Proposition 3]{freij2017private}, $\operatorname{rowsp} (\mathbf{G}) \star \operatorname{rowsp}( \mathbf{G}' )$ is a GRS code of dimension $K+T-1$. Let $\mathbf{F} \in \mathbb{F}_q^{(K+T-1)\times N}$ be a generator matrix of $\operatorname{rowsp} (\mathbf{G}) \star \operatorname{rowsp}( \mathbf{G}' )$.
Let $\mathbf{X}$ be a $(\gamma+\beta)\times(\gamma+\beta)N$ matrix over $\mathbb{F}_q$ satisfying the $(1,\gamma+\beta)$-uniform selection full-rank property, and such that the matrix formed by the first $\gamma N$ columns of $\mathbf{X}\star\mathbf{F}_{\langle\gamma+\beta\rangle}$ satisfies the $(U,\gamma)$-uniform puncturing MDS property\footnote{The existence of ${\mathbf{H}}$ will be proved in Lemma~\ref{lem:H}, and the proof for ${\mathbf{X}}$ is similar.}. Define $\mathbf{H}' \triangleq \mathbf{X} \star \mathbf{G}'_{\langle \gamma+\beta \rangle}$, which is a $(\gamma+\beta)T \times (\gamma+\beta)N$ matrix\footnote{The properties of $\mathbf{H}'$ and $\mathbf{G}_{\langle \gamma+\beta \rangle} \star \mathbf{H}'$ will be established in Lemma~\ref{lem:H'-properties}.} over $\mathbb{F}_q$.

\item Let $\overline{\mathbf{H}}$ be an $\frac{L}{K} \times \delta (\gamma+\beta)^{M-1} N$ matrix over $\mathbb{F}_q$, 
which is block-diagonal with $(\gamma+\beta)^{M-1}$ copies of $\mathbf{H}$ on the diagonal.  Define 
\begin{align}\label{q^[1]}
\left({\mathbf q}_{1,1}^{[1]},\dots, {\mathbf q}_{1,N}^{[1]},\dots,{\mathbf q}_{\delta(\gamma+\beta)^{M-1},1}^{[1]},\dots, {\mathbf q}_{\delta(\gamma+\beta)^{M-1},N}^{[1]}\right) = S^{[1]} \cdot\overline{\mathbf{H}}.
\end{align}
Let $\overline{\mathbf{H}}'$ be a $\delta(\gamma+\beta)^{M-1}T \times \delta (\gamma+\beta)^{M-1} N$ matrix over $\mathbb{F}_q$, 
which is block-diagonal with $\delta(\gamma+\beta)^{M-2}$ copies of $\mathbf{H}'$ on the diagonal. For $m\in [2:M]$, define
\begin{align}\label{q^[i]}
\left({\mathbf q}_{1,1}^{[m]},\dots, {\mathbf q}_{1,N}^{[m]},\dots,{\mathbf q}_{\delta(\gamma+\beta)^{M-1},1}^{[m]},\dots, {\mathbf q}_{\delta(\gamma+\beta)^{M-1},N}^{[m]}\right)  = S^{[m]}\bigl(:, 1:\delta(\gamma+\beta)^{M-1}T\bigr) \cdot \overline{\mathbf{H}}',
\end{align}
where $S^{[m]}\bigl(:, 1:\delta(\gamma+\beta)^{M-1}T\bigr)$ represents the first $\delta(\gamma+\beta)^{M-1} T$ columns of $S^{[m]}$.
For each $m \in [M]$, $n \in [N]$, and $i\in [\delta(\gamma+\beta)^{M-1}]$, the vector ${\mathbf q}_{i,n}^{[m]} \in \mathbb{F}_q^{\frac{L}{K}\times 1}$ is the query computed by the user for server $n$ and file $W^{[m]}$.
\item {\it Query Structure.} For the  case $N-U-2B>K+T-1$, the query vectors are organized into
single-index queries and mixed queries as follows: For each undesired file index $m \in [2:M]$, we partition the query sequence 
\[
\left( \mathbf{q}_{1,1}^{[m]}, \dots, \mathbf{q}_{1,N}^{[m]}, \dots,
       \mathbf{q}_{\delta(\gamma+\beta)^{M-1},1}^{[m]}, \dots,
       \mathbf{q}_{\delta(\gamma+\beta)^{M-1},N}^{[m]} \right)
\]
into $\delta(\gamma+\beta)^{M-2}$ consecutive blocks, each of length $(\gamma+\beta)N$.
Each such block splits into two parts:
the first $\gamma N$ vectors are associated with an index set $\mathcal{I} \subseteq [2:M]$
that contains $m$,
and the remaining $\beta N$ vectors are associated with $\mathcal{I} \cup \{1\}$.
We call such a block an $\mathcal{I}$-block, and for a given $\mathcal{I}$ the number of
$\mathcal{I}$-blocks is $\delta \gamma^{M-|\mathcal{I}|-1} \beta^{|\mathcal{I}|-1}$.
For each such $\mathcal{I} \subseteq [2:M]$
that contains $m$, by concatenating the first parts from all $\mathcal{I}$-blocks,
we obtain a sequence of query vectors of length $\delta N \gamma^{M-|\mathcal{I}|} \beta^{|\mathcal{I}|-1}$,
which we denote by $\mathbf{q}^{[m]}_{\mathcal{I}}$.
Similarly, concatenating the second parts from all $\mathcal{I}$-blocks gives a sequence of length
$\delta N \gamma^{M-|\mathcal{I}|-1} \beta^{|\mathcal{I}|}$,
denoted by $\mathbf{q}^{[m]}_{\mathcal{I}\cup\{1\}}$.
For the desired file $W^{[1]}$, let $\mathcal{J}_1, \dots, \mathcal{J}_{2^{M-1}}$ be all subsets of $[1:M]$ that contain $1$. We partition the query sequence
\[
\left( \mathbf{q}_{1,1}^{[1]}, \dots, \mathbf{q}_{1,N}^{[1]}, \dots,
       \mathbf{q}_{\delta(\gamma+\beta)^{M-1},1}^{[1]}, \dots,
       \mathbf{q}_{\delta(\gamma+\beta)^{M-1},N}^{[1]} \right)
= S^{[1]} \cdot \overline{\mathbf{H}}=
\left(\mathbf{q}^{[1]}_{\mathcal{J}_1}, \dots, \mathbf{q}^{[1]}_{\mathcal{J}_{2^{M-1}}}\right),
\]
where each $\mathbf{q}^{[1]}_{\mathcal{J}}$ has length $\delta N \gamma^{M-|\mathcal{J}|} \beta^{|\mathcal{J}|-1}$. With a slight abuse of notation, for any nonempty $\mathcal{L}\subseteq[M]$ and any $m\in\mathcal{L}$,
let $\mathbf{q}_{\mathcal{L}}^{[m]}$ denote the corresponding  query sequence for file $W^{[m]}$
associated with $\mathcal{L}$, as defined above. 
We have the query structure: $\sum_{m \in \mathcal{L}} \mathbf{q}_{\mathcal{L}}^{[m]}$ (this is a formal 
sum and does not represent an actual summation of the query vectors). 
Arranging each $\mathbf{q}_{\mathcal{L}}^{[m]}$ as a $\delta \gamma^{M-|\mathcal{L}|} \beta^{|\mathcal{L}|-1} \times N$ matrix,
the $n$-th column of the sum matrix $\sum_{m\in\mathcal{L}} \mathbf{q}_{\mathcal{L}}^{[m]}$ gives the query sent to server $n$.
Upon receiving these query vectors, 
server $n$ computes 
\[
\sum_{m \in \mathcal{L}} 
\bigl\langle \mathbf{q}_{\mathcal{L},\,\ell,n}^{[m]},\; W_n^{[m]} \bigr\rangle ,
\]
for each $\ell \in [\delta \gamma^{M-|\mathcal{L}|} \beta^{|\mathcal{L}|-1}]$,
where $\mathbf{q}_{\mathcal{L},\,\ell,n}^{[m]}$ denotes the $\frac{L}{K}\times 1$ query vector 
in the $\ell$-th row and $n$-th column of the matrix $\mathbf{q}_{\mathcal{L}}^{[m]}$.
In the  boundary case $N-U-2B=K+T-1$, we have $\beta=0$. The mixed-query part disappears. Therefore, the query
structure reduces to the  single-layer construction.
\end{itemize}

By construction, the proposed scheme is linear. Moreover, as shown in the privacy analysis below, it also satisfies the full support-rank property.

\subsection{Correctness, privacy and rate}
We first prove the existence of ${\mathbf{H}}$. The matrix $\mathbf{H}$ in Lemma \ref{lem:H} satisfies a stronger condition, which already guarantees the required property. Hence, we only need to establish the existence of $\mathbf{H}$ as stated in the following lemma.
\begin{lemma}\label{lem:H}
  Let $\delta$ be a positive integer such that $K \mid \delta(N-U-2B)$ and $N-U-2B \ge K$.  Let $\mathbf{G} \in \mathbb{F}_q^{K \times N}$ be a column-MDS matrix.  When   $q$ is sufficiently large, there exists a matrix ${\mathbf{H}} \in \mathbb{F}_q^{\frac{\delta (N-U-2B)}{K} \times \delta N}$ such that ${\mathbf{H}}$ is column-MDS and 
$ {{\mathbf H}} \star {{\mathbf{G}}}_{\langle\delta \rangle}$ is a  $\delta(N-U-2B)\times \delta N$ column-MDS matrix.
\end{lemma}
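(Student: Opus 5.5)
Set $r=\frac{\delta(N-U-2B)}{K}$ and treat the entries of $\mathbf{H}\in\mathbb{F}_q^{r\times\delta N}$ as indeterminates $\mathbf{h}=(h_{i,j})$. We use the Schwartz--Zippel argument: write both required properties as the non-vanishing of finitely many polynomials in $\mathbf{h}$, show each polynomial is nonzero, and then, for $q$ large enough, choose $\mathbf{h}$ avoiding the union of their zero sets.

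\emph{Column-MDS of $\mathbf{H}$.} This says that every $r$ columns of $\mathbf{H}$ (or all columns, if $\delta N<r$) are linearly independent. Each square $r\times r$ minor (or suitable maximal minor) of a matrix of indeterminates is a nonzero polynomial of degree $r$. So this part is routine.

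\emph{Column-MDS of $\mathbf{H}\star\mathbf{G}_{\langle\delta\rangle}$.} This matrix has $rK=\delta(N-U-2B)$ rows and $\delta N$ columns. Its column indexed by block $b$ and position $n$ is $\mathbf{h}_{(b,n)}\otimes\mathbf{g}_n$, where $\mathbf{g}_n$ is the $n$-th column of $\mathbf{G}$, up to the fixed row ordering. Since $\delta N\ge \delta(N-U-2B)$, we must show that for every set $\Lambda$ of $rK$ column indices the $rK\times rK$ determinant $\det[\mathbf{h}_{\lambda}\otimes \mathbf{g}_{n(\lambda)}]_{\lambda\in\Lambda}$ is not the zero polynomial. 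It is enough to find one specialization of $\mathbf{h}$ that makes it nonzero.

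For the specialization we partition $\Lambda$ into $r$ groups $\Lambda_1,\dots,\Lambda_r$, each of size $K$, such that within each group the underlying positions $n(\lambda)$ are distinct. Then we set $\mathbf{h}_\lambda=\mathbf{e}_i$ for $\lambda\in\Lambda_i$. After this substitution the matrix, up to row and column permutation, is block diagonal with blocks $[\mathbf{g}_{n(\lambda)}]_{\lambda\in\Lambda_i}$. Each block is a $K\times K$ submatrix of $\mathbf{G}$ on $K$ distinct columns, so it is invertible because $\mathbf{G}$ is column-MDS. The determinant is therefore nonzero.

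The main obstacle is the combinatorial partition step, since each position $n\in[N]$ can occur up to $\delta$ times in $\Lambda$ (once per block). The plan is to list the elements of $\Lambda$ sorted by position $n$ and assign them to groups round-robin: element $t$ in this order goes to group $\Lambda_{(t \bmod r)+1}$. Two elements with the same $n$ are consecutive in the sorted list, and there are at most $\delta$ of them. They land in the same group only if $r<\delta$, that is, only if $N-U-2B<K$, which the hypothesis $N-U-2B\ge K$ rules out. So they fall into distinct groups, and each group receives exactly $K$ elements because $|\Lambda|=rK$.

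With every relevant polynomial shown nonzero, their product is a nonzero polynomial of bounded degree. For $q$ exceeding that degree, Schwartz--Zippel gives an $\mathbf{h}\in\mathbb{F}_q^{r\delta N}$ at which the product does not vanish, which proves the lemma. The same argument applies verbatim to $\mathbf{X}$.
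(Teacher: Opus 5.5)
Your proposal is correct and follows the paper's proof essentially step for step. You apply Schwartz--Zippel over the entries of $\mathbf{H}$, and you show each $\delta(N-U-2B)\times\delta(N-U-2B)$ minor of $\mathbf{H}\star\mathbf{G}_{\langle\delta\rangle}$ is a nonzero polynomial via the specialization $\mathbf{h}_\lambda=\mathbf{e}_i$ on a round-robin partition of the selected columns into $r=\delta(N-U-2B)/K$ groups of $K$ distinct positions. That partition is valid exactly because $r\ge\delta$, i.e.\ $N-U-2B\ge K$. The result is a block-diagonal matrix with invertible $K\times K$ blocks of $\mathbf{G}$, and together with the $r\times r$ minors of $\mathbf{H}$ itself this completes the argument, just as in the paper.
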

\begin{IEEEproof}
We prove existence via the Schwartz–Zippel lemma\cite{schwartz1980fast,zippel1979probabilistic}. Treat the entries of ${\mathbf{H}} = ({h}_{ij})$ as independent variables.  We claim that for any subset $S \subseteq [\delta N]$ of size $\delta(N-U-2B)$, 
the determinant of the $\delta(N-U-2B) \times \delta(N-U-2B)$ submatrix 
$({\mathbf{H}} \star {\mathbf{G}}_{\langle\delta \rangle})_S$ formed by the columns in $S$ 
is a nonzero polynomial in the variables $h_{ij}$.
Suppose the columns of $\mathbf{G}$ are $\mathbf{g}_1, \dots, \mathbf{g}_N$.
We partition the columns of $\mathbf{G}_{\langle \delta \rangle}$ indexed by $S$ into 
$\frac{\delta(N-U-2B)}{K}$ groups, each consisting of $K$ distinct vectors from 
$\{\mathbf{g}_1,\dots,\mathbf{g}_N\}$. 
This is feasible because each original column index $n\in[N]$
is repeated at most $\delta$ times among the columns selected by $S$, and the number of groups, namely $\frac{\delta(N-U-2B)}{K}$, 
is at least $\delta$. 
By listing equal original indices consecutively
and assigning the \(\delta(N-U-2B)\) positions cyclically to
\(\frac{\delta(N-U-2B)}{K}\) groups, we can arrange the columns of $S$ into $\frac{\delta(N-U-2B)}{K}$ groups with $K$ distinct $\mathbf{g}_n$'s.  Write the resulting partition as 
$S = S_1 \cup S_2 \cup \cdots \cup S_{\delta(N-U-2B)/K}$. For each $i$, the columns of $\mathbf{G}_{\langle \delta \rangle}$ indexed by $S_i$ form a 
$K \times K$ invertible submatrix (because any $K$ columns of $\mathbf{G}$ are linearly independent and the $K$ columns indexed by $S_i$ are distinct $\mathbf{g}_n$'s).
For $i = 1,\dots,\frac{\delta(N-U-2B)}{K}$, assign values to the $i$-th row of $\mathbf{H}$ 
such that it has $1$'s in the positions $S_i$ and $0$'s elsewhere. Under this assignment, up to a column permutation, $(\mathbf{H} \star \mathbf{G}_{\langle \delta \rangle})_S$ 
is block‑diagonal with the $i$-th block being the invertible $K \times K$ matrix 
given by the columns of $\mathbf{G}_{\langle \delta\rangle }$ indexed by $S_i$. Consequently, $\det((\mathbf{H} \star \mathbf{G}_{\langle \delta \rangle})_S) \neq 0$ 
for this particular choice of $\mathbf{H}$, proving that the determinant polynomial 
in the variables $h_{ij}$ is not identically zero. This gives  ${\delta N\choose \delta (N-U-2B)}$ nonzero polynomials.

 Consider the determinant of some $\delta (N-U-2B)/K$ columns of ${{\mathbf H}}$. This gives another ${\delta N\choose \delta (N-U-2B)/K}$ nonzero polynomials.  Then based on the celebrated Schwartz-Zippel lemma, there must exist a realization of variables  over a sufficiently large finite field such that all ${\delta N\choose \delta (N-U-2B)/K}+{\delta N\choose \delta (N-U-2B)}$ polynomials simultaneously evaluate to nonzero values, and consequently a desired matrix $\mathbf{H}$ satisfying the conditions exists.
\end{IEEEproof}
\begin{remark}\label{existH'}
The existence of $\mathbf{X}$ is also proved by showing the existence of a matrix satisfying a stronger property.  Treating the entries of $\mathbf{X}$ as variables, 
the conditions that $\mathbf{X}$ satisfies the $(1,\gamma+\beta)$-uniform selection full-rank property and that the first $\gamma N$ columns 
of $\mathbf{X} \star \mathbf{F}_{\langle \gamma+\beta\rangle}$ form a column-MDS matrix 
give rise to
\[
N
+ \binom{\gamma N}{\gamma(N-U-2B)}
\]
nonzero determinant polynomials (recall that $\gamma(N-U-2B)=
(\gamma+\beta)(K+T-1)$).  When the field size $q$ is sufficiently large, there must exist a realization of variables
such that all these polynomials simultaneously evaluate to nonzero values, 
which yields the desired matrix $\mathbf{X}$. 
\end{remark}

Next, we present the properties of $\mathbf{H}'$ and $\mathbf{G}_{\langle  \gamma+\beta \rangle} \star \mathbf{H}'$.

\begin{lemma}\label{lem:H'-properties}
The matrix $\mathbf{H}'$ constructed in the above scheme  satisfies the $(T,\gamma+\beta)$-uniform selection full-rank property. Moreover,
\[
\operatorname{rowsp}\bigl(\mathbf{G}_{\langle \gamma+\beta \rangle} \star \mathbf{H}'\bigr)
=
\operatorname{rowsp}\bigl(\mathbf{X} \star \mathbf{F}_{\langle \gamma+\beta \rangle}\bigr).
\]
Since $\mathbf{X} \star \mathbf{F}_{\langle \gamma+\beta \rangle}$ has full row rank, it is a generator matrix for 
$\operatorname{rowsp}(\mathbf{G}_{\langle \gamma+\beta \rangle} \star \mathbf{H}')$.
\end{lemma}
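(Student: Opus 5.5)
The plan is to use the tensor (Kronecker) structure of the columns of a Schur-product matrix. Write $a\triangleq\gamma+\beta$. Index the columns of $\mathbf{X}$ and of $\mathbf{H}'=\mathbf{X}\star\mathbf{G}'_{\langle a\rangle}$ by pairs $(i,n)$ with $i\in[a]$ and $n\in[N]$. Let $\mathbf{x}_{(i,n)}\in\mathbb{F}_q^{a}$ be the corresponding column of $\mathbf{X}$, and let $\mathbf{g}'_n\in\mathbb{F}_q^{T}$ be the $n$-th column of $\mathbf{G}'$. With the lexicographic row ordering of Definition~\ref{def:matrix-schur}, the $(i,n)$-th column of $\mathbf{H}'$ is exactly $\mathbf{x}_{(i,n)}\otimes\mathbf{g}'_n$. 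Both claims will be read off from this description.

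\emph{Uniform selection full rank.} Fix $\mathcal{L}\subseteq[N]$ with $|\mathcal{L}|=T$. We must show that the $aT$ vectors $\mathbf{x}_{(i,n)}\otimes\mathbf{g}'_n$, for $i\in[a]$ and $n\in\mathcal{L}$, are linearly independent. Suppose $\sum_{i,n}c_{i,n}\,\mathbf{x}_{(i,n)}\otimes\mathbf{g}'_n=\mathbf{0}$. Group the terms by $n$ and set $\mathbf{y}_n=\sum_i c_{i,n}\mathbf{x}_{(i,n)}$. Reshaping the relation as an $a\times T$ matrix gives $\sum_{n\in\mathcal{L}}\mathbf{y}_n(\mathbf{g}'_n)^\top=\mathbf{0}$, i.e., $[\mathbf{y}_n]_{n\in\mathcal{L}}\cdot \mathbf{G}'(:,\mathcal{L})^\top=\mathbf{0}$.

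Since $\mathbf{G}'$ generates a $T$-dimensional GRS (hence MDS) code, $\mathbf{G}'(:,\mathcal{L})$ is an invertible $T\times T$ matrix. Therefore $\mathbf{y}_n=\mathbf{0}$ for every $n\in\mathcal{L}$. By the $(1,a)$-uniform selection full-rank property of $\mathbf{X}$, for each fixed $n$ the $a$ vectors $\mathbf{x}_{(1,n)},\dots,\mathbf{x}_{(a,n)}$ are linearly independent. Hence all $c_{i,n}=0$, and the rank is $aT$.

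\emph{Row space identity.} The rows of $\mathbf{G}_{\langle a\rangle}\star\mathbf{H}'$ are, up to permutation, the vectors $\mathbf{g}_k\star\mathbf{x}_j\star\mathbf{g}'_t$ (here rows of $\mathbf{G}_{\langle a\rangle}$, $\mathbf{X}$, $\mathbf{G}'_{\langle a\rangle}$), because the componentwise product is commutative and associative. Thus they equal the rows of $\mathbf{X}\star(\mathbf{G}\star\mathbf{G}')_{\langle a\rangle}$, using $\mathbf{G}_{\langle a\rangle}\star\mathbf{G}'_{\langle a\rangle}=(\mathbf{G}\star\mathbf{G}')_{\langle a\rangle}$.

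Next, by Claim~\ref{claim}(1), $\operatorname{rowsp}(\mathbf{G}\star\mathbf{G}')=\operatorname{rowsp}(\mathbf{G})\star\operatorname{rowsp}(\mathbf{G}')=\operatorname{rowsp}(\mathbf{F})$. The map $\mathbf{c}\mapsto(\mathbf{c},\dots,\mathbf{c})$ is linear, so $\operatorname{rowsp}((\mathbf{G}\star\mathbf{G}')_{\langle a\rangle})=\operatorname{rowsp}(\mathbf{F}_{\langle a\rangle})$.

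Finally, for a fixed matrix $\mathbf{X}$, $\operatorname{rowsp}(\mathbf{X}\star\mathbf{D})$ depends only on $\operatorname{rowsp}(\mathbf{D})$, by bilinearity of $\star$. Indeed, each row $\mathbf{x}_j\star\mathbf{d}$ with $\mathbf{d}=\sum_s\lambda_s\mathbf{f}_s$ equals $\sum_s\lambda_s\,\mathbf{x}_j\star\mathbf{f}_s$, and the reverse inclusion is symmetric. This yields the stated equality of row spaces.

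\emph{Full row rank.} The matrix $\mathbf{X}\star\mathbf{F}_{\langle a\rangle}$ has $a(K+T-1)=\gamma(N-U-2B)$ rows. By the choice of $\mathbf{X}$, its first $\gamma N$ columns satisfy the $(U,\gamma)$-uniform puncturing MDS property. Puncturing any $U$ uniform positions leaves a column-MDS matrix with $\gamma(N-U)\ge\gamma(N-U-2B)$ columns. Column-MDS with at least as many columns as rows forces rank equal to the number of rows, so $\mathbf{X}\star\mathbf{F}_{\langle a\rangle}$ has full row rank and is a generator matrix of the common row space. This argument also covers the boundary case $\beta=0$.

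The only step needing care is keeping the row/column orderings consistent in the Kronecker description and the permutation of Schur factors. Row permutations do not affect row spaces or column properties, so I expect no real obstacle.
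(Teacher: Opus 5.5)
Your proof is correct and follows essentially the same route as the paper: the Kronecker description $\mathbf{x}_{(i,n)}\otimes\mathbf{g}'_n$ of the columns of $\mathbf{H}'$ is a compact repackaging of the paper's coordinate-wise argument (first invert $\mathbf{G}'(:,\mathcal{T})$, then use the $(1,\gamma+\beta)$-uniform selection full-rank property of $\mathbf{X}$), and your row-space argument via linearity of $\mathbf{v}\mapsto\mathbf{x}_u\star\mathbf{v}$ matches the paper's. The one addition is that you justify the full row rank of $\mathbf{X}\star\mathbf{F}_{\langle\gamma+\beta\rangle}$, using the $(U,\gamma)$-uniform puncturing MDS property together with $(\gamma+\beta)(K+T-1)=\gamma(N-U-2B)$; the paper simply asserts this.
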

\begin{IEEEproof}
Fix $\mathcal{T}\subseteq [N]$ with $|\mathcal{T}|=T$. We prove that the columns $
\{\mathbf{H}'(:,(\ell-1)N+n):\ n\in\mathcal{T}, \ell\in[ \gamma+\beta]\}$
are linearly independent over $\mathbb{F}_q$, where $\mathbf{H}'(:, (\ell-1)N+n)$ denotes the $((\ell-1)N+n)$-th column of $\mathbf{H}'$, which has length $(\gamma+\beta)T$. Suppose there exist
$\lambda_{\ell,n}\in\mathbb{F}_q$ such that
\begin{align}\label{independent}
\sum_{n\in \mathcal{T}}\sum_{\ell=1}^{\gamma+\beta}
\lambda_{\ell,n} \mathbf{H}'(:,(\ell-1)N+n)=\mathbf{0}.
\end{align}
We will show that $\lambda_{\ell,n}=0$ for all $n\in\mathcal{T}$ and
$ \ell \in [ \gamma+\beta]$. Recall that \(\mathbf{H}'=\mathbf{X}\star \mathbf{G}'_{\langle\gamma+\beta\rangle}\). For each \(s\in[\gamma+\beta]\) and \(t\in[T]\), the
\(((s-1)T+t)\)-th coordinate of \eqref{independent} gives
\[
\sum_{n\in\mathcal T}\sum_{\ell=1}^{\gamma+\beta}
\lambda_{\ell,n} X_{s,(\ell-1)N+n}G'_{t,n}=\sum_{n\in\mathcal T}G'_{t,n}\sum_{\ell=1}^{\gamma+\beta}
\lambda_{\ell,n} X_{s,(\ell-1)N+n}=0,
\] 
 where \(X_{s,(\ell-1)N+n}\) is the \((s,(\ell-1)N+n)\)-entry of \(\mathbf X\), and \(G'_{t,n}\) is the \((t,n)\)-entry of \(\mathbf G'\).

For each fixed \(s\), the invertibility of \(\mathbf G'(:,\mathcal T)\) implies
\[
\sum_{\ell=1}^{\gamma+\beta}
\lambda_{\ell,n} X_{s,(\ell-1)N+n}=\sum_{\ell=1}^{\gamma+\beta}
X_{s,(\ell-1)N+n} \lambda_{\ell,n} =0,\qquad n\in\mathcal T.
\] 
Because \(\mathbf{X}\) satisfies the \((1,\gamma+\beta)\)-uniform selection full-rank property, the coefficient matrix in the above equation is invertible. Therefore, \(\lambda_{\ell,n}=0\) for all \(\ell\in[\gamma+\beta]\) and \(n\in\mathcal{T}\). Hence, the \((\gamma+\beta)T\) selected columns of \(\mathbf{H}'\) are linearly independent, and \(\mathbf{H}'\) satisfies the \((T,\gamma+\beta)\)-uniform selection full-rank property.

Denote the rows of $\mathbf{G}, \mathbf{G}', \mathbf{F}, \mathbf{X}$ by $\mathbf{g}_k, \mathbf{g}'_t, \mathbf{f}_j, \mathbf{x}_u$, respectively. Every row of $\mathbf{G}_{\langle \gamma+\beta \rangle} \star \mathbf{H}'$ is of the form
$(\mathbf{g}_k)_{\langle \gamma+\beta\rangle } \star \mathbf{x}_u \star (\mathbf{g}'_t)_{\langle \gamma+\beta\rangle }=\mathbf{x}_u\star (\mathbf{g}_k\star \mathbf{g}'_t)_{\langle \gamma+\beta\rangle } $. Since
$\mathbf{G}\star \mathbf{G}'$ and $\mathbf{F}$ generate the same  code, 
\[
\operatorname{span}\{\mathbf{g}_k\star \mathbf{g}'_t \mid k\in[K],\, t\in[T]\}
=
\operatorname{span}\{\mathbf{f}_j \mid j\in[K+T-1]\}.
\]
For each fixed \(u\), multiplication component-wise by \(\mathbf{x}_u\) is a linear map. Therefore,
\[
\operatorname{span}\{{\mathbf{x}}_u\star(\mathbf{g}_k\star\mathbf{g}'_t)_{\langle\gamma+\beta\rangle}\mid k\in[K],\,t\in[T]\}
=
\operatorname{span}\{\mathbf{x}_u\star(\mathbf{f}_j)_{\langle \gamma+\beta\rangle }\mid j\in[K+T-1]\}.
\]
Taking the span over all \(u\in[\gamma+\beta]\) proves $\operatorname{rowsp}(\mathbf{G}_{\langle \gamma+\beta \rangle} \star \mathbf{H}')
=
\operatorname{rowsp}(\mathbf{X} \star \mathbf{F}_{\langle \gamma+\beta \rangle})$. Since $\mathbf{X} \star \mathbf{F}_{\langle \gamma+\beta \rangle}$ has full row rank, it is a generator matrix for 
$\operatorname{rowsp}(\mathbf{G}_{\langle \gamma+\beta \rangle} \star \mathbf{H}')$.
\end{IEEEproof}

Next, we prove the correctness and privacy of the scheme.
\subsubsection{Correctness}
For each undesired file index $m \in [2:M]$, by the block-diagonal structure of 
$\overline{\mathbf{H}}'$ in \eqref{q^[i]}, we can partition the query sequence 
into $\delta(\gamma+\beta)^{M-2}$ consecutive blocks of $(\gamma+\beta)N$ vectors.
For each $j \in [\delta(\gamma+\beta)^{M-2}]$,
\[
\left( \mathbf{q}_{(j-1)(\gamma+\beta)+1,1}^{[m]}, \dots, \mathbf{q}_{(j-1)(\gamma+\beta)+1,N}^{[m]}, \dots,
       \mathbf{q}_{j(\gamma+\beta),1}^{[m]}, \dots, \mathbf{q}_{j(\gamma+\beta),N}^{[m]} \right)
= S^{[m]}_{j} \cdot \mathbf{H}',
\]
where $S^{[m]}_{j}$ denotes the submatrix of $S^{[m]}$ consisting of columns 
$(\gamma+\beta)T(j-1)+1$ through $(\gamma+\beta)T j$.
By Claim \ref{claim}, the $(\gamma +\beta) N$ queried undesired symbols \[\left(\langle \mathbf{q}_{(j-1)(\gamma+\beta)+1,1}^{[m]}, W^{[m]}_{1} \rangle, \dots,
        \langle \mathbf{q}_{(j-1)(\gamma+\beta)+1,N}^{[m]}, W^{[m]}_{N} \rangle, \dots, 
 \langle \mathbf{q}_{j(\gamma+\beta),1}^{[m]}, W^{[m]}_{1} \rangle, \dots,
        \langle \mathbf{q}_{j(\gamma+\beta),N}^{[m]}, W^{[m]}_{N} \rangle\right)\]
  form a codeword in $\operatorname{rowsp}(\mathbf{G}_{\langle\gamma+\beta\rangle}\star\mathbf{H}')$.  By Lemma \ref{lem:H'-properties},
$\mathbf{X}\star\mathbf{F}_{\langle\gamma+\beta\rangle}$ is a generator matrix of this code. Therefore, there exists a unique $
(c_{j,1}^{[m]}, c_{j,2}^{[m]},\dots,c_{j,\gamma(N-U-2B)}^{[m]})\in \mathbb{F}_q^{\gamma(N-U-2B)}$
such that
\begin{equation}
\begin{aligned}\label{undesired}
&\bigl(\langle \mathbf{q}_{(j-1)(\gamma+\beta)+1,1}^{[m]}, W^{[m]}_{1} \rangle, \dots,
        \langle \mathbf{q}_{(j-1)(\gamma+\beta)+1,N}^{[m]}, W^{[m]}_{N} \rangle, \dots, \\
&\qquad \langle \mathbf{q}_{j(\gamma+\beta),1}^{[m]}, W^{[m]}_{1} \rangle, \dots,
        \langle \mathbf{q}_{j(\gamma+\beta),N}^{[m]}, W^{[m]}_{N} \rangle \bigr) 
        = \bigl( c_{j,1}^{[m]}, c_{j,2}^{[m]},\dots,c_{j,\gamma(N-U-2B)}^{[m]} \bigr) 
        \cdot ( \mathbf{X}\star\mathbf{F}_{\langle\gamma+\beta\rangle} ).
\end{aligned}
\end{equation}
Since the matrix $\mathbf{X}\star\mathbf{F}_{\langle\gamma+\beta\rangle}$ is a $\gamma(N-U-2B) \times (\gamma+\beta)N$ matrix and the first $\gamma N$ columns satisfy the $(U,\gamma)$-uniform puncturing MDS property, it guarantees that even in the presence of $U$ unresponsive servers and $B$ Byzantine servers, the $\gamma(N-U)$ downloaded symbols among the first $\gamma N$ symbols in \eqref{undesired} can recover 
$(  c_{j,1}^{[m]}, c_{j,2}^{[m]},\dots,c_{j,\gamma(N-U-2B)}^{[m]} )$.
This implies that the last $\beta N$ symbols in \eqref{undesired} can be recovered. The user is then able to eliminate the interferences from the undesired files in the summations and finally recovers all truthful  desired symbols from the responsive and non-Byzantine servers and   the erroneous information is still confined to the $B$ Byzantine servers.

By the block-diagonal structure of $\overline{\mathbf{H}}$ in \eqref{q^[1]},
for each $\ell \in [(\gamma+\beta)^{M-1}]$,
\[
\left( \mathbf{q}_{(\ell-1)\delta+1,1}^{[1]}, \dots, \mathbf{q}_{(\ell-1)\delta+1,N}^{[1]},\dots,\mathbf{q}_{\ell\delta ,1}^{[1]},\dots, \mathbf{q}_{\ell\delta ,N}^{[1]} \right)
= S_{\ell}^{[1]} \cdot \mathbf{H},
\]
where $S_{\ell}^{[1]}$ denotes the submatrix of $S^{[1]}$ consisting of columns 
$\frac{\delta(N-U-2B)}{K}(\ell-1)+1$ through $\frac{\delta(N-U-2B)}{K}\ell$. Let $W^{[1]}=(\mathbf{a}_{11},\dots,\mathbf{a}_{1K})$, where $\mathbf{a}_{1k} \in \mathbb{F}_q^{\frac{L}{K}\times 1}$ are the columns of $W^{[1]}$.
By Claim \ref{claim}, the $\delta N$ queried desired symbols satisfy the following relation:
\begin{equation}
\begin{aligned}\label{desired}
&\bigl(\langle \mathbf{q}_{(\ell-1)\delta+1,1}^{[1]}, W^{[1]}_{1} \rangle, \dots,
        \langle \mathbf{q}_{(\ell-1)\delta+1,N}^{[1]}, W^{[1]}_{N} \rangle, \dots, \\
&\qquad \langle \mathbf{q}_{\ell \delta ,1}^{[1]}, W^{[1]}_{1} \rangle, \dots,
        \langle \mathbf{q}_{\ell \delta ,N}^{[1]}, W^{[1]}_{N} \rangle \bigr) 
        = \bigl( \mathbf{a}_{11}^\top S^{[1]}_{\ell}, \dots, \mathbf{a}_{1K}^\top S^{[1]}_{\ell} \bigr) 
        \cdot (\mathbf{G}_{\langle\delta\rangle} \star \mathbf{H}).
\end{aligned}
\end{equation}
Because there are $U$ unresponsive servers, the user actually receives only 
$\delta(N-U)$ symbols in \eqref{desired}. 
Since ${\mathbf{G}}_{\langle \delta\rangle}\star {\mathbf{H}}$ is a $\delta(N-U-2B) \times \delta N$ 
 matrix that satisfies the $(U,\delta)$-uniform puncturing MDS property,  uniformly puncturing the same $U$ columns from every $N$-column block yields a $\delta(N-U-2B) \times \delta(N-U)$ matrix which is column-MDS. Hence, it can correct up to $\delta B$ erroneous symbols. 
Together with the fact that at most $\delta B$ of the downloaded symbols may be 
corrupted by the $B$ Byzantine servers, the vector 
$( \mathbf{a}_{11}^{\top} S^{[1]}_{\ell}, \dots, \mathbf{a}_{1K}^{\top} S^{[1]}_{\ell} )$ 
can be uniquely recovered.
Repeating the process for all $\ell \in [(\gamma+\beta)^{M-1}]$ yields the concatenated 
vector $( \mathbf{a}_{1k}^{\top} S^{[1]}_{1}, \dots, \mathbf{a}_{1k}^{\top} S^{[1]}_{(\gamma+\beta)^{M-1}} ) 
= \mathbf{a}_{1k}^{\top} S^{[1]}$ for each $k \in [K]$. 
Since $S^{[1]}$ is a full‑rank matrix, the entire file $W^{[1]}$ is retrieved once all $\delta(N-U)(\gamma+\beta)^{M-1}$ downloaded desired symbols from the set 
$\{\langle \mathbf{q}_{i,n}^{[1]}, W^{[1]}_{n} \rangle : i\in [\delta(\gamma+\beta)^{M-1}],\, n\in [N]\}$ are collected. Therefore the scheme is correct.

\begin{remark}
For simplicity of notation, \eqref{undesired} describes the queried undesired symbols
corresponding to a single undesired file $W^{[m]}$, where $m\in[2:M]$.
In the actual retrieval process, each mixed query may involve a
different subset of undesired files. Since all undesired files employ
the same matrix $\mathbf{H}'$, the corresponding interference
contributions are encoded by the same matrix $
\mathbf{X}\star\mathbf{F}_{\langle\gamma+\beta\rangle}$.
Therefore, for any mixed query, the sum of the interference
contributions from the involved undesired files preserves the same code
structure as \eqref{undesired}. More precisely, the individual coefficient vectors 
in \eqref{undesired} can be replaced by the summation of the coefficient vectors of
the involved undesired files.
\end{remark}

\subsubsection{Privacy}
To prove that the scheme is $T$ private, it suffices to show for $\forall \{n_1,\dots,n_{T}\}\subseteq[N]$, it holds that
\begin{align}\label{privacyT}
&\left(\mathbf{q}_{1,n_1}^{[1]},\dots,\mathbf{q}_{1,n_T}^{[1]},\dots,\mathbf{q}_{\delta(\gamma+\beta)^{M-1},n_1}^{[1]},\dots,\mathbf{q}_{\delta(\gamma+\beta)^{M-1},n_T}^{[1]} \right)\\
&\quad \sim \left(\mathbf{q}_{1,n_1}^{[m]},\dots,\mathbf{q}_{1,n_T}^{[m]},\dots,\mathbf{q}_{\delta(\gamma+\beta)^{M-1},n_1}^{[m]},\dots,\mathbf{q}_{\delta(\gamma+\beta)^{M-1},n_T}^{[m]} \right), \quad \forall m \in [2:M].\notag
\end{align}

For the desired file $W^{[1]}$, consider the block-diagonal matrix $\overline{\mathbf{H}} = \operatorname{diag}(\mathbf{H}, \dots, \mathbf{H})$. Since each diagonal block $\mathbf{H}$ is a $\frac{\delta(N-U-2B)}{K} \times \delta N$ matrix  satisfying the $(T,\delta)$-uniform selection full-rank property, $\overline{\mathbf{H}}$ satisfies the $(T,\delta(\gamma+\beta)^{M-1})$-uniform selection full-rank property. Thus, uniformly selecting the same $T$ columns from every $N$-column block of $\overline{\mathbf{H}}$ yields a $\frac{L}{K}\times \delta(\gamma+\beta)^{M-1}T$ matrix of full column rank. Combined with the uniform randomness of the invertible matrix $S^{[1]}$, according to Lemma \ref{lem:dim}, this guarantees that $(\mathbf{q}_{1,n_1}^{[1]},\dots,\mathbf{q}_{1,n_T}^{[1]},\dots,\mathbf{q}_{\delta(\gamma+\beta)^{M-1},n_1}^{[1]},\dots,\mathbf{q}_{\delta(\gamma+\beta)^{M-1},n_T}^{[1]} ) $ are identically distributed as $S^{[1]}(:,1:\delta(\gamma+\beta)^{M-1}T)$.

For each undesired file $W^{[m]}$, $m\in [2:M]$, consider the block-diagonal matrix $\overline{\mathbf{H}}' = \operatorname{diag}(\mathbf{H}', \dots, \mathbf{H}')$.  By Lemma \ref{lem:H'-properties}, each diagonal block $\mathbf{H}'$ is a $(\gamma+\beta) T \times (\gamma+\beta) N$ matrix  satisfying the $(T,\gamma+\beta)$-uniform selection full-rank property, therefore $\overline{\mathbf{H}}'$ satisfies the $(T,\delta(\gamma+\beta)^{M-1})$-uniform selection full-rank property. Combined with the uniform randomness of the invertible matrix $S^{[m]}$, this guarantees that $(\mathbf{q}_{1,n_1}^{[m]},\dots,\mathbf{q}_{1,n_T}^{[m]},\dots,\mathbf{q}_{\delta(\gamma+\beta)^{M-1},n_1}^{[m]},\dots,\mathbf{q}_{\delta(\gamma+\beta)^{M-1},n_T}^{[m]} )$ are identically distributed as $S^{[m]}(:,1:\delta(\gamma+\beta)^{M-1}T)$.  Furthermore, since $S^{[1]},\dots,  S^{[M]}$ are independent and identically distributed, we arrive at \eqref{privacyT}.

Since the query vectors corresponding to each file received by any set of at most $T$  servers are linearly independent, the proposed scheme satisfies the full support-rank property.

\subsubsection{Rate}
 For the  case $N-U-2B>K+T-1$,
the PIR rate is computed as follows:
\begin{align*}
R = \frac{L}{D}
&= \frac{\delta (\gamma+\beta)^{M-1}(N-U-2B)}{\delta (N-U) \sum_{d=1}^{M} \gamma^{M-d} \beta^{d-1} \binom{M}{d}} \\
&= \frac{N-U-2B}{N-U}\cdot \frac{(\gamma+\beta)^{M-1}}{\beta^{-1}\sum_{d=1}^{M} \gamma^{M-d} \beta^{d} \binom{M}{d}} \\
&= \frac{N-U-2B}{N-U}\cdot \frac{(\gamma+\beta)^{M-1}}{\beta^{-1}\left((\gamma+\beta)^{M}-\gamma^{M}\right)} \\
&= \frac{N-U-2B}{N-U}\cdot \frac{1-\frac{\gamma}{\gamma+\beta}}{1-\left(\frac{\gamma}{\gamma+\beta}\right)^{M}}\\
&=\frac{N-U-2B}{N-U}\left(1+\frac{K+T-1}{N-U-2B}+\cdots+\left(\frac{K+T-1}{N-U-2B}\right)^{M-1}\right)^{-1}.
\end{align*}

For the boundary case $N-U-2B=K+T-1$, we have $\beta=0$. Although the above result still holds for this case, the derivation involves $\beta^{-1}$. Hence, we recompute the rate for this case as follows:
\begin{align*}
R = \frac{L}{D}
&= \frac{N-U-2B}{ (N-U) M} = \frac{N-U-2B}{ N-U}\cdot \frac{1}{M}.
\end{align*}

\begin{remark}
When $K=1$ or $T=1$, the capacity results in Theorems~\ref{theorem:U-B} and~\ref{thm:multi-file-UB} hold for arbitrary $[N,K]$ MDS storage codes. The reason for choosing $\mathbf{G}$ and $\mathbf{G}'$ as generator matrices of GRS codes with the same evaluation points is to ensure that $\operatorname{rowsp}(\mathbf{G})\star\operatorname{rowsp}(\mathbf{G}')$ is an MDS code of dimension $K+T-1$. When $K=1$ or $T=1$, however, it suffices to take $\mathbf{G}$ and $\mathbf{G}'$ to be column-MDS matrices. Therefore, the GRS storage-code assumption is unnecessary when $K=1$ or $T=1$.
\end{remark}

To further illustrate the general scheme, we provide a small-parameter example using the notation introduced above.

\subsection{An example for $(U,B,M, N,T,K)=(1,0,3,7,2,2)$}
Assume $W^{[1]}$ is the desired file. Let $\delta$ be the smallest positive integer such that 
$K$ divides $\delta (N-U-2B)$. Let $\gamma$ and $\beta$ be the smallest positive 
integers such that $(\gamma+\beta)(K+T-1) = \gamma (N-U-2B)$. In this example,  we have $\delta = \gamma=\beta=1$. 
Each file is comprised of $L=\delta (\gamma+\beta)^{M-1}(N-U-2B)=24$ independent symbols from $\mathbb{F}_{7}$:
\begin{align*}
    W^{[1]}=({\bf a}_1, {\bf a}_2),\quad W^{[2]}=({\bf b}_1, {\bf b}_2), \quad W^{[3]}=({\bf c}_1, {\bf c}_2),
\end{align*}
where ${\bf a}_1, {\bf a}_2,{\bf b}_1, {\bf b}_2, {\bf c}_1, {\bf c}_2\in \mathbb{F}_{7}^{12\times1}$. Each file is stored in the system via a GRS code with generator matrix $\mathbf{G}$. Let 
\begin{align*}
    {\bf G}=\left(\begin{array}{ccccccc}
        1 &1&1&1&1&1&1  \\
         0&1&2&3&4 &5&6
    \end{array}\right)_{2\times 7}.
\end{align*}
The storage is specified as
\begin{equation*}
 (W^{[m]}_{1},\dots, W^{[m]}_{7})=W^{[m]}{\bf G},
\end{equation*}
where $W^{[m]}_{n}\in \mathbb{F}_{7}^{12\times1}$ for $m\in[3],n\in[7]$.

The user privately chooses three random matrices $S^{[1]}$, $S^{[2]}$, and $S^{[3]}$, independently and uniformly from the set of all full-rank $12\times 12$ matrices over $\mathbb{F}_{7}$.   Let
\begin{align*}
    {\bf H}=\left(\begin{array}{ccccccc}
     1 &1&1&1&1&1&1  \\
    0&1&2&4&3 &6&5\\
    0&1&4&2&2&1&4
    \end{array}\right)_{3\times 7}.
\end{align*}
 $\mathbf{H}$ is a $3\times 7$ matrix satisfying  the $(2,1)$-uniform selection full-rank property, and $\mathbf{G}\star \mathbf{H}$ is a $6\times 7$ matrix satisfying the $(1,1)$-uniform puncturing MDS property. 
 Let
\begin{align*}
    {\bf G}'&=\left(\begin{array}{ccccccc}
        1 &1&1&1&1&1&1  \\
         0&1&2&3&4 &5&6
    \end{array}\right)_{2\times 7},\\
    \mathbf{F}&=\left(\begin{array}{ccccccc}
        1 &1&1&1&1&1&1  \\
         0&1&2&3&4 &5&6\\
         0&1&4&2&2&4&1
    \end{array}\right)_{3\times 7},\\
    \mathbf{X}&=\left(\begin{array}{cccccccccccccc}
        1 &1&1&1&1&1&1&0&0&0&0&0&0&0  \\
         0&1&1&6&1&6&6&1&1&1&1&1&1&1
    \end{array}\right)_{2\times 14}.
\end{align*}
$\mathbf{G}'$ is a generator matrix of a GRS code of dimension $T=2$ that shares the same evaluation points as the storage GRS code. Therefore, $\operatorname{rowsp} (\mathbf{G}) \star \operatorname{rowsp}( \mathbf{G}' )$ is a GRS code of dimension $3$ and  $\mathbf{F}$ is a generator matrix of this GRS code. $\mathbf{X}$ satisfies the $(1,2)$-uniform selection full-rank property, and such that the matrix formed by the first $7$ columns of $\mathbf{X}\star\mathbf{F}_{\langle2\rangle}$ satisfies the $(1,1)$-uniform puncturing MDS property. Define $\mathbf{H}' \triangleq \mathbf{X} \star \mathbf{G}'_{\langle 2 \rangle}$, which is a $4\times 14$ matrix over $\mathbb{F}_7$.
According to Lemma \ref{lem:H'-properties}, $\mathbf{H}'$   satisfies the $(2,2)$-uniform selection full-rank property, and $
\operatorname{rowsp}(\mathbf{G}_{\langle 2 \rangle} \star \mathbf{H}')
=
\operatorname{rowsp}(\mathbf{X} \star \mathbf{F}_{\langle 2 \rangle})$.
Therefore, $\mathbf{X} \star \mathbf{F}_{\langle 2\rangle}$  is a generator matrix for 
$\operatorname{rowsp}(\mathbf{G}_{\langle 2\rangle} \star \mathbf{H}')$.

Define $\overline{\mathbf{H}} = \operatorname{diag}(\mathbf{H},\mathbf{H},\mathbf{H},\mathbf{H})$ and $\overline{\mathbf{H}}' = \operatorname{diag}(\mathbf{H}',\mathbf{H}')$, which are block-diagonal matrices with $\mathbf{H}$ appearing four times and $\mathbf{H}'$ twice on the diagonal, respectively.
Define 
\begin{align*}
\left({\mathbf q}_{1,1}^{[1]},\dots, {\mathbf q}_{1,7}^{[1]},\dots,{\mathbf q}_{4,1}^{[1]},\dots, {\mathbf q}_{4,7}^{[1]}\right) &= S^{[1]} \cdot{\overline{\mathbf{H}}},\\
\left({\mathbf q}_{1,1}^{[2]},\dots, {\mathbf q}_{1,7}^{[2]},\dots,{\mathbf q}_{4,1}^{[2]},\dots, {\mathbf q}_{4,7}^{[2]}\right) &= S^{[2]}(:,1:8) \cdot{\overline{\mathbf{H}}'},\\
\left({\mathbf q}_{1,1}^{[3]},\dots, {\mathbf q}_{1,7}^{[3]},\dots,{\mathbf q}_{4,1}^{[3]},\dots, {\mathbf q}_{4,7}^{[3]}\right) &= S^{[3]}(:,1:8) \cdot{\overline{\mathbf{H}}'}.
\end{align*}
For each $m\in[3]$ and $n\in[7]$, the four vectors $\mathbf{q}_{1,n}^{[m]}, \mathbf{q}_{2,n}^{[m]}, \mathbf{q}_{3,n}^{[m]}, \mathbf{q}_{4,n}^{[m]} \in \mathbb{F}_{7}^{12\times 1}$ are the query vectors generated by the user for server $n$ and file $W^{[m]}$.

The query structure is shown in Table~\ref{table2}. Upon receiving the query vectors of the forms $\mathbf{q}_{*,n}^{[m]}$, $\mathbf{q}_{*,n}^{[m]}+\mathbf{q}_{*,n}^{[m']}$ (with $m\neq m'$), and $\mathbf{q}_{*,n}^{[1]}+\mathbf{q}_{*,n}^{[2]}+\mathbf{q}_{*,n}^{[3]}$, server $n$ computes the corresponding inner products: $\langle\mathbf{q}_{*,n}^{[m]}, W_n^{[m]}\rangle$, $\langle\mathbf{q}_{*,n}^{[m]}, W_n^{[m]}\rangle+\langle\mathbf{q}_{*,n}^{[m']}, W_n^{[m']}\rangle$, and $\langle\mathbf{q}_{*,n}^{[1]}, W_n^{[1]}\rangle+\langle\mathbf{q}_{*,n}^{[2]}, W_n^{[2]}\rangle+\langle\mathbf{q}_{*,n}^{[3]}, W_n^{[3]}\rangle$. The construction of the scheme is now complete.

\begin{table}[t]
\centering
\fontsize{7pt}{9pt}\selectfont   
\renewcommand{\arraystretch}{1.2} 
\setlength{\tabcolsep}{3pt}        
\caption{Query structure.}
\label{table2}
\begin{tabular}{|c| c| c| c| c| c| c|}
\hline
Server $1$ & Server $2$ & Server $3$ & Server $4$ & Server $5$ & Server $6$ &Server $7$\\
\hline
$\mathbf{q}_{1,1}^{[1]}$ & $\mathbf{q}_{1,2}^{[1]}$ & $\mathbf{q}_{1,3}^{[1]}$ & $\mathbf{q}_{1,4}^{[1]}$ & $\mathbf{q}_{1,5}^{[1]}$ & $\mathbf{q}_{1,6}^{[1]}$& $\mathbf{q}_{1,7}^{[1]}$\\
$\mathbf{q}_{1,1}^{[2]}$ & $\mathbf{q}_{1,2}^{[2]}$ & $\mathbf{q}_{1,3}^{[2]}$ & $\mathbf{q}_{1,4}^{[2]}$ & $\mathbf{q}_{1,5}^{[2]}$ & $\mathbf{q}_{1,6}^{[2]}$& $\mathbf{q}_{1,7}^{[2]}$\\ 
$\mathbf{q}_{1,1}^{[3]}$ & $\mathbf{q}_{1,2}^{[3]}$ & $\mathbf{q}_{1,3}^{[3]}$ & $\mathbf{q}_{1,4}^{[3]}$ & $\mathbf{q}_{1,5}^{[3]}$ & $\mathbf{q}_{1,6}^{[3]}$& $\mathbf{q}_{1,7}^{[3]}$\\
$\mathbf{q}_{2,1}^{[1]}+\mathbf{q}_{2,1}^{[2]}$ & $\mathbf{q}_{2,2}^{[1]}+\mathbf{q}_{2,2}^{[2]}$ & $\mathbf{q}_{2,3}^{[1]}+\mathbf{q}_{2,3}^{[2]}$ & $\mathbf{q}_{2,4}^{[1]}+\mathbf{q}_{2,4}^{[2]}$ & $\mathbf{q}_{2,5}^{[1]}+\mathbf{q}_{2,5}^{[2]}$ & $\mathbf{q}_{2,6}^{[1]}+\mathbf{q}_{2,6}^{[2]}$& $\mathbf{q}_{2,7}^{[1]}+\mathbf{q}_{2,7}^{[2]}$\\
$\mathbf{q}_{3,1}^{[1]}+\mathbf{q}_{2,1}^{[3]}$ & $\mathbf{q}_{3,2}^{[1]}+\mathbf{q}_{2,2}^{[3]}$ & $\mathbf{q}_{3,3}^{[1]}+\mathbf{q}_{2,3}^{[3]}$ & $\mathbf{q}_{3,4}^{[1]}+\mathbf{q}_{2,4}^{[3]}$ & $\mathbf{q}_{3,5}^{[1]}+\mathbf{q}_{2,5}^{[3]}$ & $\mathbf{q}_{3,6}^{[1]}+\mathbf{q}_{2,6}^{[3]}$& $\mathbf{q}_{3,7}^{[1]}+\mathbf{q}_{2,7}^{[3]}$\\
$\mathbf{q}_{3,1}^{[2]}+\mathbf{q}_{3,1}^{[3]}$ & $\mathbf{q}_{3,2}^{[2]}+\mathbf{q}_{3,2}^{[3]}$ & $\mathbf{q}_{3,3}^{[2]}+\mathbf{q}_{3,3}^{[3]}$ & $\mathbf{q}_{3,4}^{[2]}+\mathbf{q}_{3,4}^{[3]}$ & $\mathbf{q}_{3,5}^{[2]}+\mathbf{q}_{3,5}^{[3]}$ & $\mathbf{q}_{3,6}^{[2]}+\mathbf{q}_{3,6}^{[3]}$& $\mathbf{q}_{3,7}^{[2]}+\mathbf{q}_{3,7}^{[3]}$\\
$\mathbf{q}_{4,1}^{[1]}+\mathbf{q}_{4,1}^{[2]}+\mathbf{q}_{4,1}^{[3]}$ & $\mathbf{q}_{4,2}^{[1]}+\mathbf{q}_{4,2}^{[2]}+\mathbf{q}_{4,2}^{[3]}$ & $\mathbf{q}_{4,3}^{[1]}+\mathbf{q}_{4,3}^{[2]}+\mathbf{q}_{4,3}^{[3]}$ & $\mathbf{q}_{4,4}^{[1]}+\mathbf{q}_{4,4}^{[2]}+\mathbf{q}_{4,4}^{[3]}$ & $\mathbf{q}_{4,5}^{[1]}+\mathbf{q}_{4,5}^{[2]}+\mathbf{q}_{4,5}^{[3]}$ & $\mathbf{q}_{4,6}^{[1]}+\mathbf{q}_{4,6}^{[2]}+\mathbf{q}_{4,6}^{[3]}$& $\mathbf{q}_{4,7}^{[1]}+\mathbf{q}_{4,7}^{[2]}+\mathbf{q}_{4,7}^{[3]}$\\
\hline
\end{tabular}
\end{table}

\subsubsection{$T=2$ privacy}
To prove that the scheme is $T=2$ private, it suffices to show for $\forall \{n_1,n_2\}\subseteq [7]$
\begin{align}
&\left(\mathbf{q}_{1,n_1}^{[1]}, \mathbf{q}_{1,n_2}^{[1]}, \mathbf{q}_{2,n_1}^{[1]},\mathbf{q}_{2,n_2}^{[1]}, \mathbf{q}_{3,n_1}^{[1]}, \mathbf{q}_{3,n_2}^{[1]},\mathbf{q}_{4,n_1}^{[1]}, \mathbf{q}_{4,n_2}^{[1]}\right)\notag\\
&\quad\quad \sim \left(\mathbf{q}_{1,n_1}^{[m]}, \mathbf{q}_{1,n_2}^{[m]}, \mathbf{q}_{2,n_1}^{[m]},\mathbf{q}_{2,n_2}^{[m]}, \mathbf{q}_{3,n_1}^{[m]}, \mathbf{q}_{3,n_2}^{[m]},\mathbf{q}_{4,n_1}^{[m]}, \mathbf{q}_{4,n_2}^{[m]}\right),\quad \forall m\in \{2,3\}.\label{ex2T=2}
\end{align}
For the desired file $W^{[1]}$, since $\overline{\mathbf{H}} = \operatorname{diag}(\mathbf{H}, \mathbf{H},\mathbf{H},\mathbf{H})$ and each diagonal block $\mathbf{H}$ satisfies the $(2,1)$-uniform selection full-rank property, $\overline{\mathbf{H}}$ is a $12\times 28$ matrix satisfying  the $(2,4)$-uniform selection full-rank property. Thus, uniformly selecting the same $T=2$ columns from every $7$-column block of $\overline{\mathbf{H}}$ yields a $12\times 8$ matrix of full column rank. The invertible matrix $S^{[1]}$ is chosen uniformly at random. According to Lemma \ref{lem:dim}, the query vectors $(\mathbf{q}_{1,n_1}^{[1]}, \mathbf{q}_{1,n_2}^{[1]}, \mathbf{q}_{2,n_1}^{[1]},\mathbf{q}_{2,n_2}^{[1]}, \mathbf{q}_{3,n_1}^{[1]}, \mathbf{q}_{3,n_2}^{[1]},\mathbf{q}_{4,n_1}^{[1]}, \mathbf{q}_{4,n_2}^{[1]})$  are identically distributed as $S^{[1]}(:,1:8)$. 
For the undesired file $W^{[m]}$, $m\in\{2,3\}$, $\overline{\mathbf{H}}' = \operatorname{diag}(\mathbf{H}', \mathbf{H}')$ and each diagonal block $\mathbf{H}'$ satisfies the $(2,2)$-uniform selection full-rank property, $\overline{\mathbf{H}}'$ is an $8\times 28$ matrix  satisfying the $(2,4)$-uniform selection full-rank property. Thus, uniformly selecting the same $T=2$ columns from every $7$-column block of $\overline{\mathbf{H}}'$ yields a $8\times 8$ matrix of full  rank. The invertible matrix $S^{[m]}$ is chosen uniformly at random. According to Lemma \ref{lem:dim}, the query vectors $(\mathbf{q}_{1,n_1}^{[m]}, \mathbf{q}_{1,n_2}^{[m]}, \mathbf{q}_{2,n_1}^{[m]},\mathbf{q}_{2,n_2}^{[m]}, \mathbf{q}_{3,n_1}^{[m]}, \mathbf{q}_{3,n_2}^{[m]},\mathbf{q}_{4,n_1}^{[m]}, \mathbf{q}_{4,n_2}^{[m]})$  are identically distributed as $S^{[m]}(:,1:8)$. Furthermore, since $S^{[1]}$ and $S^{[m]}$ are independent and identically distributed, we arrive at \eqref{ex2T=2}. Therefore, the scheme is linear and of full support-rank.

\subsubsection{Correctness}
Consider each undesired file index $m\in\{2,3\}$. For each $j \in [2]$, by Claim \ref{claim}, the $14$ queried undesired symbols $
(\langle \mathbf{q}_{2j-1,1}^{[m]}, W^{[m]}_{1} \rangle, \dots,
        \langle \mathbf{q}_{2j-1,7}^{[m]}, W^{[m]}_{7} \rangle, 
\langle \mathbf{q}_{2j,1}^{[m]}, W^{[m]}_{1} \rangle, \dots,
        \langle \mathbf{q}_{2j,7}^{[m]}, W^{[m]}_{7} \rangle )$
form a codeword in $\operatorname{rowsp}(\mathbf{G}_{\langle 2\rangle}\star\mathbf{H}')$.  By Lemma \ref{lem:H'-properties},
$\mathbf{X}\star\mathbf{F}_{\langle2\rangle}$ is a generator matrix of this code. Therefore, there exists a unique $
(c_{j,1}^{[m]}, c_{j,2}^{[m]},\dots,c_{j,6}^{[m]})\in \mathbb{F}_7^{6}$
such that
\begin{align*}
\left(\langle \mathbf{q}_{2j-1,1}^{[m]}, W^{[m]}_{1} \rangle, \dots,
        \langle \mathbf{q}_{2j-1,7}^{[m]}, W^{[m]}_{7} \rangle, 
\langle \mathbf{q}_{2j,1}^{[m]}, W^{[m]}_{1} \rangle, \dots,
        \langle \mathbf{q}_{2j,7}^{[m]}, W^{[m]}_{7} \rangle \right)  
        = (c_{j,1}^{[m]}, c_{j,2}^{[m]},\dots,c_{j,6}^{[m]}) 
        \cdot \left(\mathbf{X}\star\mathbf{F}_{\langle2\rangle}\right).
\end{align*}
For $j=1$,
\begin{align}\label{examp2:undesired}
\left(\langle \mathbf{q}_{1,1}^{[m]}, W^{[m]}_{1} \rangle, \dots,
        \langle \mathbf{q}_{1,7}^{[m]}, W^{[m]}_{7} \rangle, 
\langle \mathbf{q}_{2,1}^{[m]}, W^{[m]}_{1} \rangle, \dots,
        \langle \mathbf{q}_{2,7}^{[m]}, W^{[m]}_{7} \rangle \right)  
        = \left(c_{1,1}^{[m]}, c_{1,2}^{[m]},\dots,c_{1,6}^{[m]}\right) 
        \cdot \left(\mathbf{X}\star\mathbf{F}_{\langle2\rangle}\right).
\end{align}
Because one server is unresponsive, the user actually downloads only $6$ symbols from the set of the first seven symbols in \eqref{examp2:undesired}. Since the first seven columns of $\mathbf{X}\star\mathbf{F}_{\langle2\rangle}$ satisfy the $(1,1)$-uniform puncturing MDS property, puncturing any one column yields a $6\times 6$ invertible matrix.  Therefore, $(c_{1,1}^{[m]}, c_{1,2}^{[m]},\dots,c_{1,6}^{[m]}) $ can be recovered, and consequently the seven symbols $\{\langle \mathbf{q}_{2,1}^{[m]}, W^{[m]}_{1} \rangle, \dots, \langle \mathbf{q}_{2,7}^{[m]}, W^{[m]}_{7} \rangle\}$ can be computed. The user can then eliminate the interferences from the undesired files in the pairwise summations of the form $\langle\mathbf{q}_{*,n}^{[1]}, W_n^{[1]}\rangle+\langle\mathbf{q}_{*,n}^{[m]},W_n^{[m]}\rangle$.
For $j=2$, the remaining $14$ queried undesired symbols satisfy the following relation:
\begin{align}\label{ex:undesired+}
\left(\sum_{m\in\{2,3\}}\langle \mathbf{q}_{3,1}^{[m]}, W^{[m]}_{1} \rangle, \dots,
        \sum_{m\in\{2,3\}}\langle \mathbf{q}_{3,7}^{[m]}, W^{[m]}_{7} \rangle, 
\sum_{m\in\{2,3\}}\langle \mathbf{q}_{4,1}^{[m]}, W^{[m]}_{1} \rangle, \dots,
       \sum_{m\in\{2,3\}} \langle \mathbf{q}_{4,7}^{[m]}, W^{[m]}_{7} \rangle \right)\\
        = \left(c_{2,1}^{[2]}+c_{2,1}^{[3]}, c_{2,2}^{[2]}+c_{2,2}^{[3]},\dots,c_{2,6}^{[2]}+c_{2,6}^{[3]}\right) 
        \cdot \left(\mathbf{X}\star\mathbf{F}_{\langle2\rangle}\right).\notag
\end{align}
Following the same analysis as before, the vector $(c_{2,1}^{[2]}+c_{2,1}^{[3]}, c_{2,2}^{[2]}+c_{2,2}^{[3]},\dots,c_{2,6}^{[2]}+c_{2,6}^{[3]})$ can be recovered from  $6$ downloaded symbols  among the first $7$ symbols in \eqref{ex:undesired+}. Consequently, the last $7$ symbols in \eqref{ex:undesired+} can be computed. The user can then eliminate the interferences from the undesired files in the 3-wise summations of the form $\langle\mathbf{q}_{4,n}^{[1]}, W_n^{[1]}\rangle+\langle\mathbf{q}_{4,n}^{[2]},W_n^{[2]}\rangle+\langle\mathbf{q}_{4,n}^{[3]},W_n^{[3]}\rangle$ and finally recover all downloaded desired symbols.

For the desired file $W^{[1]}$ and each $\ell\in [4]$, the  queried desired symbols satisfy the following relation:
\begin{align*}
\left(\langle \mathbf{q}_{\ell,1}^{[1]}, W^{[1]}_{1} \rangle, \dots,
        \langle \mathbf{q}_{\ell,7}^{[1]}, W^{[1]}_{7} \rangle
\right) 
        = \left( \mathbf{a}_{1}^\top S^{[1]}(:,3\ell-2:3\ell) ,  \mathbf{a}_{2}^\top S^{[1]}(:,3\ell-2:3\ell) \right) 
        \cdot \left(\mathbf{G}\star \mathbf{H}\right).
\end{align*}
Since $\mathbf{G}\star \mathbf{H}$ satisfies the $(1,1)$-uniform puncturing MDS property, puncturing any one column of  $\mathbf{G}\star \mathbf{H}$ yields
a $6 \times 6$ invertible matrix. Thus, even in the presence of one unresponsive server, the vector $( \mathbf{a}_{1}^\top S^{[1]}(:,3\ell-2:3\ell) ,  \mathbf{a}_{2}^\top S^{[1]}(:,3\ell-2:3\ell) )$  can be recovered from received $6$ desired symbols among  $\{\langle \mathbf{q}_{\ell,1}^{[1]}, W^{[1]}_{1} \rangle, \dots, \langle \mathbf{q}_{\ell,7}^{[1]}, W^{[1]}_{7} \rangle\}$. Repeating the process for all $\ell \in [4]$ yields the concatenated 
vectors $\mathbf{a}_{1}^{\top} S^{[1]}$ and $\mathbf{a}_{2}^{\top} S^{[1]}$. Moreover, since $S^{[1]}$ is full rank, the entire file $W^{[1]}$ is retrieved once all $24$ downloaded desired symbols are collected. Therefore, the scheme is correct. The PIR rate is $R=\frac{24}{42}=\frac{4}{7}$.

\section{Converse bound}\label{sec:converse}
Throughout this section, we assume that $N-U-2B\geq KT$, as in Theorem~\ref{theorem:U-B}.  We first derive a general converse bound
for U-B-MDS-TPIR and then specialize it to linear,
full support-rank schemes. We also show that the
resulting capacity expression remains valid without
these restrictions when $K=1$ or $T=1$.
The proof does not require a server-symmetry assumption.

Let $
W=(W^{[1]},\ldots,W^{[M]})
$
denote the entire file set. 
The following zero-error decodability property follows from the argument used in \cite{Byzantinebanawan2018capacity} and \cite[Lemma 1]{holzbaur2021toward}. We restate it in our notation for completeness.

\begin{lemma}\label{lem:UB-decoding}
For any zero-error U-B-MDS-TPIR scheme, any set
$\mathcal{U}\subseteq[N]$ of $U$ unresponsive servers, and any
subset
\[
\mathcal{I}\subseteq[N]\setminus\mathcal{U},
\qquad
|\mathcal{I}|=N-U-2B,
\]
the desired file $W^{[m]}$ can be recovered from the truthful
answers of the servers in $\mathcal{I}$. 
\end{lemma}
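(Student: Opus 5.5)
I will argue by contradiction, using the standard indistinguishability argument for Byzantine servers from \cite{Byzantinebanawan2018capacity}. Fix $\mathcal{U}$ with $|\mathcal{U}|=U$ and $\mathcal{I}\subseteq[N]\setminus\mathcal{U}$ with $|\mathcal{I}|=N-U-2B$. The remaining responsive servers $[N]\setminus(\mathcal{U}\cup\mathcal{I})$ number exactly $2B$; split them into two disjoint sets $\mathcal{B}_1,\mathcal{B}_2$ of size $B$ each. The claim to prove is that, for every realization of the queries $Q_{1:N}^{[m]}$, the truthful answers $A_{\mathcal{I}}^{[m]}$ determine $W^{[m]}$, i.e. $H(W^{[m]}\mid A_{\mathcal{I}}^{[m]},Q_{1:N}^{[m]})=0$.

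Suppose not. Then there is a query realization $Q$ and two file realizations $W$ and $\widetilde W$ with $W^{[m]}\neq\widetilde W^{[m]}$ whose truthful answers agree on $\mathcal{I}$: $A_{\mathcal{I}}(W)=A_{\mathcal{I}}(\widetilde W)$. I will build two adversarial scenarios, both allowed by the model, that produce the same received vector.
\begin{itemize}
\item \emph{Scenario 1.} The files are $W$ and the Byzantine set is $\mathcal{B}_1$. Servers in $\mathcal{B}_1$ send $A_{\mathcal{B}_1}(\widetilde W)$, which has the prescribed length; servers in $\mathcal{I}\cup\mathcal{B}_2$ answer truthfully.
\item \emph{Scenario 2.} The files are $\widetilde W$ and the Byzantine set is $\mathcal{B}_2$. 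Servers in $\mathcal{B}_2$ send $A_{\mathcal{B}_2}(W)$; the other servers answer truthfully.
\end{itemize}
In both scenarios the user receives $A_{\mathcal{I}}(W)=A_{\mathcal{I}}(\widetilde W)$ on $\mathcal{I}$, $A_{\mathcal{B}_1}(\widetilde W)$ on $\mathcal{B}_1$, and $A_{\mathcal{B}_2}(W)$ on $\mathcal{B}_2$. The received data and the queries are therefore identical, yet the desired file differs. This contradicts the zero-error correctness requirement, which must hold for every admissible $\mathcal{B}$.

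Two technical points need checking.
\begin{itemize}
\item Both file realizations must have positive probability. This holds because the files are uniform and independent of the queries.
\item Correctness is stated as a conditional-entropy condition. It must be read as a worst-case (zero-error) requirement over all Byzantine behaviours and all positive-probability realizations, so that a single bad pair $(W,\widetilde W)$ already violates it.
\end{itemize}
The main subtlety I expect is formalizing "arbitrary replacement" so that the Byzantine answers in each scenario may depend on the other file realization. I will handle this by noting that the correctness condition quantifies over all deterministic adversarial strategies; it is enough to exhibit one strategy per scenario, namely the fixed replacement strings above.

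Taking the contrapositive gives $H(W^{[m]}\mid A_{\mathcal{I}}^{[m]},Q_{1:N}^{[m]})=0$ for every such $\mathcal{I}$, which is the statement.
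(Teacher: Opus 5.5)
Your proposal is correct and is the same indistinguishability argument the paper uses: two admissible scenarios, with Byzantine sets $\mathcal{B}_1$ and $\mathcal{B}_2$, that yield identical received data but different $W^{[m]}$. The only difference is that you let $\mathcal{B}_1$ lie when the files are $W$ and $\mathcal{B}_2$ lie when they are $\widetilde W$, which is the mirror image of the paper's assignment and immaterial. One point to state explicitly next to your worst-case reading of correctness: the contradiction rests on the user's decoder being a single function of the received answers and queries that cannot depend on $\mathcal{B}$, since the Byzantine identities are unknown. Without that, identical received data in the two scenarios would not contradict anything.
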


\begin{IEEEproof}
Fix an arbitrary set $\mathcal{U}\subseteq[N]$ with
$|\mathcal{U}|=U$ and an arbitrary set
$\mathcal{I}\subseteq[N]\setminus\mathcal{U}$ with
$|\mathcal{I}|=N-U-2B$.
Divide the remaining $2B$ servers into two disjoint sets
$\mathcal{B}_1$ and $\mathcal{B}_2$ such that
\[
[N]\setminus(\mathcal{U}\cup\mathcal{I})
=
\mathcal{B}_1\cup\mathcal{B}_2,
\qquad
|\mathcal{B}_1|=|\mathcal{B}_2|=B.
\]

Condition on an arbitrary realization of the queries
$Q_{1:N}^{[m]}$ and suppress this realization in the notation.
For a realization $W$ of the file set, let
$A_{\mathcal{S}}^{[m]}(W)$ denote the truthful answers that would
be returned by the servers in $\mathcal{S}$.

Suppose, to the contrary, that there exist two realizations
$W$ and $\widetilde{W}$ of the file set such that
\[
W^{[m]}\neq\widetilde{W}^{[m]},
\]
but
\begin{align}
A_{\mathcal{I}}^{[m]}(W)
=
A_{\mathcal{I}}^{[m]}(\widetilde{W}).
\label{eq:same-I-answers}
\end{align}

Consider first the case where the true realization of the file set
is $W$, the servers in $\mathcal{U}$ are unresponsive, and the
Byzantine servers are indexed by $\mathcal{B}_2$.
Since Byzantine servers may return arbitrary answers, the servers
in $\mathcal{B}_2$ can return the answers corresponding to
$\widetilde{W}$. Hence, the user receives
\begin{align}
\left(
A_{\mathcal{I}}^{[m]}(W),
A_{\mathcal{B}_1}^{[m]}(W),
A_{\mathcal{B}_2}^{[m]}(\widetilde{W})
\right).
\label{eq:UB-scenario-1}
\end{align}

Next, consider the case where the true realization of the file set
is $\widetilde{W}$, the servers in $\mathcal{U}$ are unresponsive,
and the Byzantine servers are indexed by $\mathcal{B}_1$.
The servers in $\mathcal{B}_1$ can return the answers corresponding
to $W$. Hence, the user receives
\begin{align}
\left(
A_{\mathcal{I}}^{[m]}(\widetilde{W}),
A_{\mathcal{B}_1}^{[m]}(W),
A_{\mathcal{B}_2}^{[m]}(\widetilde{W})
\right).
\label{eq:UB-scenario-2}
\end{align}

By \eqref{eq:same-I-answers}, the complete answers received by the
user in \eqref{eq:UB-scenario-1} and \eqref{eq:UB-scenario-2}
are identical. However, since
$W^{[m]}\neq\widetilde{W}^{[m]}$, the same received answers
cannot be uniquely decoded to the desired file in both cases,
which contradicts the zero-error correctness requirement.
Therefore,
the desired file $W^{[m]}$ can be recovered from the truthful
answers of the servers in $\mathcal{I}$,
which completes the proof.
\end{IEEEproof}
\begin{remark}\label{rem:reduced-MDS-TPIR}
For any zero-error U-B-MDS-TPIR scheme, any set
$\mathcal{U}\subseteq[N]$ of $U$ unresponsive servers, and any subset
$\mathcal{I}\subseteq[N]\setminus\mathcal{U}$ with
$|\mathcal{I}|=N-U-2B$, the truthful responses of the servers in
$\mathcal{I}$ induce a valid $[N-U-2B,K]$ MDS-TPIR scheme.
Indeed, restricting the original storage code to $\mathcal{I}$
preserves the MDS property, decodability follows from
Lemma~\ref{lem:UB-decoding}, and the $T$-privacy constraint remains
satisfied. Hence,
$
\sum_{n\in\mathcal{I}} H(A_n^{[m]})
\geq
\frac{H(W^{[m]})}
{C_{\mathrm{MDS\text{-}TPIR}}(N-U-2B)}.$
\end{remark}
We next derive a general upper bound on the retrieval rate. The total
download satisfies
\begin{align}
D
&\geq \sum_{n\in[N]\setminus\mathcal{U}}H(A_n^{[m]}) \notag\\
&=
\frac{N-U}{N-U-2B}
\cdot
\frac{1}{\binom{N-U}{N-U-2B}}
\sum_{\substack{
\mathcal{I}\subseteq[N]\setminus\mathcal{U}\\
|\mathcal{I}|=N-U-2B
}}
\sum_{n\in\mathcal{I}}H(A_n^{[m]}) \notag\\
&\geq
\frac{N-U}{N-U-2B}
\cdot
\frac{1}{\binom{N-U}{N-U-2B}}
\sum_{\substack{
\mathcal{I}\subseteq[N]\setminus\mathcal{U}\\
|\mathcal{I}|=N-U-2B
}}
\frac{H(W^{[m]})}
{C_{\mathrm{MDS\text{-}TPIR}}(N-U-2B)}
\label{D-bounded}\\
&=
\frac{N-U}{N-U-2B}
\cdot
\frac{L}
{C_{\mathrm{MDS\text{-}TPIR}}(N-U-2B)},
\notag
\end{align}
where \eqref{D-bounded} follows from
Remark~\ref{rem:reduced-MDS-TPIR}. Consequently, any U-B-MDS-TPIR
scheme satisfies
\begin{align*}
R=\frac{L}{D}
\leq
\frac{N-U-2B}{N-U}
C_{\mathrm{MDS\text{-}TPIR}}(N-U-2B).
\end{align*}

For $K>1$ and $T>1$, we now consider an arbitrary linear,
full support-rank U-B-MDS-TPIR scheme,
the induced MDS-TPIR scheme in
Remark~\ref{rem:reduced-MDS-TPIR} remains linear and of full
support-rank. Indeed, restricting to $\mathcal I$ leaves the queries and
prescribed response functions of the retained servers unchanged. Let $C_{\mathrm{MDS\text{-}TPIR}}^{\mathrm{LFSR}}$ denote the capacity of linear and full support-rank MDS-TPIR.
Applying the converse bound for linear, full support-rank
MDS-TPIR schemes established in \cite{holzbaur2021toward}, we obtain that the rate of any linear and full-support-rank U-B-MDS-TPIR scheme satisfies
\begin{align}\label{converselinful}
R
&\leq
\frac{N-U-2B}{N-U}
C_{\mathrm{MDS\text{-}TPIR}}^{\mathrm{LFSR}}(N-U-2B)\notag\\
&\leq
\frac{N-U-2B}{N-U}
\left(
1+\frac{K+T-1}{N-U-2B}
+\cdots+
\left(\frac{K+T-1}{N-U-2B}\right)^{M-1}
\right)^{-1}.
\end{align}

For $K=1$, the converse bound for  TPIR 
is given in \cite{sun_22017capacity}, whereas for $T=1$, the converse bound for
 MDS-PIR  is given in \cite{banawan2018capacity}.
Neither bound requires linearity or full support-rank.
Applying these bounds to the general rate bound derived above
shows that the final upper bound in \eqref{converselinful}
also holds without these restrictions when $K=1$ or $T=1$.

The above upper bound analysis, together with the achievable scheme in Section~\ref{Sec:main}, proves Theorem~\ref{theorem:U-B}.

\section{Extension to Multi-File U-B-MDS-TPIR}
In this section, we extend the proposed U-B-MDS-TPIR scheme to the multi-file setting, where the user wishes to retrieve $P$ files simultaneously from $M$ files without revealing the indices of the desired files to any set of up to $T$ colluding servers. The storage model and the assumptions on the robust and Byzantine servers remain unchanged.

Assume $N-U-2B\geq KT$. For \(N-U-2B>K+T-1\), the single-file scheme in Section~\ref{Sec:main} uses \(m\)-wise summations for every \(m\in[1:M]\), whereas the multi-file scheme in this section uses only \(1\)-wise and \(M\)-wise summations, with the \(M\)-wise summations encoded by a fixed \(P\times M\) column-MDS matrix \(\mathbf{D}\).  The main result of this section is summarized in the following theorem.
   \begin{theorem}\label{thm:multi-file-UB}
When $N-U-2B\geq KT$, if the storage code is a GRS code, then there exists a $P$-out-of-$M$ multi-file U-B-MDS-TPIR scheme with rate
\[
R=
\frac{N-U-2B}{N-U}
\left(
1+\frac{(K+T-1)(M-P)}{P(N-U-2B)}
\right)^{-1}.
\]
Moreover, for $P \ge \frac{M}{2}$, this rate is capacity-achieving
when either $K=1$ or $T=1$.
\end{theorem}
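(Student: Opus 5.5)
The plan is to mirror the single-file construction of Section~\ref{Sec:main}, replacing the $m$-wise summation hierarchy with a two-layer structure: one layer of individually downloaded symbols and one layer of $M$-wise mixed symbols encoded by a fixed $P\times M$ column-MDS matrix $\mathbf{D}$. Let the desired set be $\mathcal{P}$, $|\mathcal{P}|=P$.

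\emph{Construction.} Pick positive integers $\gamma,\beta$ such that $\gamma(N-U-2B)=(\gamma+\beta)(K+T-1)$; take $\gamma=1,\beta=0$ in the boundary case. Set the per-file subpacketization so that each file gets $\delta$ blocks of the form $S^{[m]}\cdot\mathbf{H}$ for desired files and $S^{[m]}(:,\cdot)\cdot\mathbf{H}'$ for undesired files, with the same $\mathbf{H}$ (Lemma~\ref{lem:H}) and $\mathbf{H}'=\mathbf{X}\star\mathbf{G}'_{\langle\gamma+\beta\rangle}$ (Lemma~\ref{lem:H'-properties}). In each $(\gamma+\beta)N$-long block, the first $\gamma N$ query vectors of every file are sent individually. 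The remaining $\beta N$ positions are sent as $P$ mixed queries. The $p$-th mixed query is $\sum_{m} D_{p,m}\mathbf{q}^{[m]}$, using a genuine scaling of the query vectors so that the answers are linear combinations.

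The desired and undesired roles must be exchangeable, so the counts must balance. Each undesired file contributes $\beta N$ interference positions per block, and those are resolved as in \eqref{undesired}, since its individual layer determines its full codeword in $\operatorname{rowsp}(\mathbf{X}\star\mathbf{F}_{\langle\gamma+\beta\rangle})$. After cancelling interference, the $P$ mixed answers give $P$ combinations $\sum_{m\in\mathcal{P}}D_{p,m}(\cdot)$ of the desired symbols. The $P\times P$ submatrix $\mathbf{D}(:,\mathcal{P})$ is invertible by the column-MDS property, so the individual desired symbols are separated.

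The cost is that the desired files' mixed positions must also carry distinct desired-symbol layers. Download per block is therefore $(N-U)(\gamma M+\beta P)$ against desired content $P(\gamma+\beta)(N-U-2B)$. Using $\beta/\gamma=\frac{N-U-2B}{K+T-1}-1$, I would then check whether this gives the claimed rate
\[
\frac{N-U-2B}{N-U}\Bigl(1+\tfrac{(K+T-1)(M-P)}{P(N-U-2B)}\Bigr)^{-1}.
\]
If it does not, I would instead fix the ratio so that the mixed layer is exactly as long as needed for $P$ files. Concretely, I would choose the number of mixed rows per block so that the $M-P$ undesired codewords of dimension $K+T-1$ are decoded from $N-U-2B$ effective individual symbols. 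That is, I would follow the Zhang--Ge multi-file design \cite{zhang2017private} with $N$ replaced by $N-U-2B$ for dimension counting and $N-U$ for download.

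\emph{Correctness and privacy.} For correctness, Byzantine errors are handled exactly as before. Each punctured generator ($\mathbf{G}_{\langle\delta\rangle}\star\mathbf{H}$, and the first $\gamma N$ columns of $\mathbf{X}\star\mathbf{F}$) is column-MDS after removing the $U$ positions, so Lemma~\ref{lem:error} corrects $B$ servers' errors. Privacy follows from Lemma~\ref{lem:dim} as in \eqref{privacyT}: every file's per-$T$-server queries are distributed as the first columns of a uniform invertible matrix, independently across files. The mixing by a fixed public $\mathbf{D}$ treats all $M$ files symmetrically, so it reveals nothing.

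\emph{Converse.} I would reuse Lemma~\ref{lem:UB-decoding} and Remark~\ref{rem:reduced-MDS-TPIR} verbatim for multi-file retrieval, averaging over $\mathcal{I}$ to obtain $R\le\frac{N-U-2B}{N-U}C_{\text{multi}}(N-U-2B)$. I would then invoke the known multi-file capacity for $P\ge M/2$ with $K=1$ or $T=1$ from \cite{banawan2018multi,zhang2017private}, namely $(1+\rho(M-P)/P)^{-1}$. The main obstacle is the construction's bookkeeping: choosing block lengths so that the mixed layer simultaneously decodes the interference and separates the $P$ desired files at exactly the stated rate.
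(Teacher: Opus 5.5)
Your plan is essentially the paper's proof. It uses the same $\gamma,\beta$ and the same individual/$M$-wise layering, with $P$ mixed answers per position through a column-MDS $\mathbf{D}$. Interference is removed through $\mathbf{X}\star\mathbf{F}_{\langle\gamma+\beta\rangle}$, Byzantine errors are corrected through $\mathbf{G}_{\langle\delta\rangle}\star\mathbf{H}$, privacy comes from Lemma~\ref{lem:dim}, and the converse is the same averaging over $\mathcal{I}$.

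The rate check you left open does close, so the fallback is unnecessary:
\[
\frac{P(\gamma+\beta)(N-U-2B)}{(N-U)(\gamma M+\beta P)}
=\frac{N-U-2B}{N-U}\cdot\frac{P}{\frac{\gamma}{\gamma+\beta}M+\bigl(1-\frac{\gamma}{\gamma+\beta}\bigr)P},
\qquad \frac{\gamma}{\gamma+\beta}=\frac{K+T-1}{N-U-2B}.
\]
The denominator then equals $P+\frac{(K+T-1)(M-P)}{N-U-2B}$, which gives exactly the claimed rate.

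There is one slip in the bookkeeping. Desired files must use $\gamma+\beta$ diagonal copies of $\mathbf{H}$, each with $\delta N$ columns, not $\delta$ copies, and the file size is $L=\delta(\gamma+\beta)(N-U-2B)$. Undesired files use $\delta$ copies of $\mathbf{H}'$ and the first $\delta(\gamma+\beta)T\le L/K$ columns of $S^{[m]}$, which is valid since $N-U-2B\ge KT$. With these choices every file produces $\delta(\gamma+\beta)N$ query vectors, and $\delta(\gamma+\beta)T$ of them at any $T$ servers. Both the positionwise mixing by $\mathbf{D}$ and the privacy comparison via Lemma~\ref{lem:dim} need this. With $\delta$ copies of $\mathbf{H}$ the lengths would not match unless $\delta=\gamma+\beta$.

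For the converse, the paper does not just cite the $K=1$ and $T=1$ multi-file bounds. It re-proves them in the appendix without a symmetry assumption, using Han's inequality and the fact that for $P\ge M/2$, $W_{1:P}$ together with the answers for $[M-P+1:M]$ determine all files. If you rely on citations instead, check that the cited converses hold for non-symmetric schemes.
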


To prove Theorem~\ref{thm:multi-file-UB}, we first give the general construction, and then present an example to illustrate the main ideas of the achievable scheme. Finally, we derive a matching converse bound for the cases $K=1$ or $T=1$ when $P\geq \frac{M}{2}$.

\subsection{Achievable scheme}
Assume $N-U-2B \geq KT$. Let $\delta$ be the smallest positive integer such that
$K$ divides $\delta(N-U-2B)$. Note that $KT\geq K+T-1$.
When $N-U-2B>K+T-1$, let $\gamma$ and $\beta$ be the smallest positive integers
such that
$
(\gamma+\beta)(K+T-1)=\gamma(N-U-2B)$.
In the boundary case $N-U-2B=K+T-1$, the assumption
$N-U-2B\ge KT$ implies that $K=1$ or $T=1$, and hence $\delta=1$.
We set  $\gamma=1$ and $\beta=0$ in this case. 

Assume that each file consists of $L=\delta (\gamma+\beta)(N-U-2B)$ symbols from a sufficiently large finite field $\mathbb{F}_q$. Consider $M$ independent files $W^{[1]}, \dots, W^{[M]}$, where each file 
is an $\frac{L}{K} \times K$ matrix whose entries are i.i.d. uniform over $\mathbb{F}_q$.  Each file is stored in the system via the same $[N,K]$ GRS storage code $\mathrm{GRS}_K(\boldsymbol{\alpha}, \mathbf{v})$ with generator matrix ${\bf G}_{K\times N}$, i.e.,
\begin{align*}
(W^{[m]}_{1},\dots, W^{[m]}_{N})=W^{[m]}{\bf G}, 
\end{align*}
where $W^{[m]}_{n}\in \mathbb{F}_q^{\frac{L}{K}\times1}$ for $m\in[M],n\in[N]$.
\begin{itemize}
 \item Without loss of generality, let $\mathcal{P}=[1:P]$ denote the set of indices of the desired files, i.e., $W^{[\mathcal{P}]}=\{W^{[1]},\ldots,W^{[P]}\}$ are the desired files. The user privately chooses  $M$ random matrices, $S^{[1]},\dots,S^{[M]}$, independently and uniformly from the set of all full rank $\frac{L}{K}\times \frac{L}{K}$ matrices over $\mathbb{F}_q$.
\item Let ${{\mathbf{H}}}$ be a $\frac{\delta(N-U-2B)}{K} \times \delta N$  matrix  satisfying the $(T,\delta)$-uniform selection full-rank property   over $\mathbb{F}_q$ such that $ {{\mathbf H}} \star {{\mathbf{G}}}_{\langle \delta \rangle}$ is a  $\delta(N-U-2B)\times \delta N$  matrix  satisfying the $(U,\delta)$-uniform puncturing MDS property.

\item Let $\mathbf{G}'$ be the generator matrix of a GRS code of dimension $T$ that shares the same evaluation points as the storage GRS code. By \cite[Proposition 3]{freij2017private}, $\operatorname{rowsp} (\mathbf{G}) \star \operatorname{rowsp}( \mathbf{G}' )$ is a GRS code of dimension $K+T-1$. Let $\mathbf{F} \in \mathbb{F}_q^{(K+T-1)\times N}$ be a generator matrix of $\operatorname{rowsp} (\mathbf{G}) \star \operatorname{rowsp}( \mathbf{G}' )$.
Let $\mathbf{X}$ be a $(\gamma+\beta)\times(\gamma+\beta)N$ matrix over $\mathbb{F}_q$ satisfying the $(1,\gamma+\beta)$-uniform selection full-rank property, and such that the matrix formed by the first $\gamma N$ columns of $\mathbf{X}\star\mathbf{F}_{\langle\gamma+\beta\rangle}$ satisfies the $(U,\gamma)$-uniform puncturing MDS property\footnote{The existence of $\mathbf{H}$ and $\mathbf{X}$ has already been proved in Lemma~\ref{lem:H} and Remark \ref{existH'}.}. Define $\mathbf{H}' \triangleq \mathbf{X} \star \mathbf{G}'_{\langle \gamma+\beta \rangle}$, which is a $(\gamma+\beta)T \times (\gamma+\beta)N$ matrix over $\mathbb{F}_q$.

\item Let $\overline{\mathbf{H}}$ be an $\frac{L}{K} \times \delta (\gamma+\beta) N$ matrix over $\mathbb{F}_q$, 
which is block-diagonal with $\gamma+\beta$ copies of $\mathbf{H}$ on the diagonal.  For $m\in [1:P]$, define 
\begin{align*}
\left({\mathbf q}_{1,1}^{[m]},\dots, {\mathbf q}_{1,N}^{[m]},\dots,{\mathbf q}_{\delta(\gamma+\beta),1}^{[m]},\dots, {\mathbf q}_{\delta(\gamma+\beta),N}^{[m]}\right) = S^{[m]} \cdot\overline{\mathbf{H}}.
\end{align*}
Let $\overline{\mathbf{H}}'$ be a $\delta(\gamma+\beta)T \times \delta (\gamma+\beta)N$ matrix over $\mathbb{F}_q$, 
which is block-diagonal with $\delta$ copies of $\mathbf{H}'$ on the diagonal. For $m\in [P+1:M]$, define
\begin{align*}
\left({\mathbf q}_{1,1}^{[m]},\dots, {\mathbf q}_{1,N}^{[m]},\dots,{\mathbf q}_{\delta(\gamma+\beta),1}^{[m]},\dots, {\mathbf q}_{\delta(\gamma+\beta),N}^{[m]}\right)  = S^{[m]}\bigl(:, 1:\delta(\gamma+\beta)T\bigr) \cdot \overline{\mathbf{H}}',
\end{align*}
where $S^{[m]}\bigl(:, 1:\delta(\gamma+\beta)T\bigr)$ represents the first $\delta(\gamma+\beta) T$ columns of $S^{[m]}$.
For each $m \in [M]$, $n \in [N]$, and $i\in [\delta(\gamma+\beta)]$, the vector ${\mathbf q}_{i,n}^{[m]} \in \mathbb{F}_q^{\frac{L}{K}\times 1}$ is the query computed by the user for server $n$ and file $W^{[m]}$.
\item {\it Query Structure.} For each undesired file index $m \in [P+1:M]$, we partition the query sequence 
\[
\left( \mathbf{q}_{1,1}^{[m]}, \dots, \mathbf{q}_{1,N}^{[m]}, \dots,
       \mathbf{q}_{\delta(\gamma+\beta),1}^{[m]}, \dots,
       \mathbf{q}_{\delta(\gamma+\beta),N}^{[m]} \right)
\]
into $\delta$ consecutive blocks, each of length $(\gamma+\beta)N$.
Each such block is split into two parts: the first $\gamma N$ vectors are associated with the index $m$, whereas the remaining $\beta N$ vectors are associated with the index set $[1:M]$. We call such a block an $m$-block. There are altogether $\delta$
$m$-blocks. By concatenating the first parts of all $m$-blocks, we obtain a sequence of query vectors of length $\delta \gamma N$, denoted by $\mathbf{q}^{[m]}_{m}$. Similarly, concatenating the second parts of all $m$-blocks yields a sequence of length $\delta \beta N$, denoted by $\mathbf{q}^{[m]}_{1:M}$.
For each desired file index $m \in [1:P]$,  we partition the query sequence
\[
\left( \mathbf{q}_{1,1}^{[m]}, \dots, \mathbf{q}_{1,N}^{[m]}, \dots,
       \mathbf{q}_{\delta(\gamma+\beta),1}^{[m]}, \dots,
       \mathbf{q}_{\delta(\gamma+\beta),N}^{[m]} \right)
= 
\left(\mathbf{q}^{[m]}_{m}, \mathbf{q}^{[m]}_{1:M}\right),
\]
where $\mathbf{q}^{[m]}_{m}$ has length $\delta \gamma N $ and $\mathbf{q}^{[m]}_{1:M}$ has length $\delta \beta N$.
The queries consist of single-index queries and $M$-wise queries.
For each $m\in [1:M]$, arranging  $\mathbf{q}_{m}^{[m]}$ as a $\delta \gamma \times N$ matrix,
the $n$-th column of the matrix $\mathbf{q}_{m}^{[m]}$ gives the query sent to server $n$ for $W^{[m]}$.
Let $\mathbf{D}$ be a fixed $P\times M$ column-MDS matrix. For each $m\in[1:M]$, we arrange $\mathbf{q}^{[m]}_{1:M}$ as a $\delta\beta\times N$ matrix, and denote its entry in the $\ell$-th row and $n$-th column by $\mathbf{q}^{[m]}_{1:M,\ell,n}$. For each $\ell \in [\delta\beta]$ and $n \in [N]$, the $M$-wise query sent to server $n$ is given by
\[
\mathbf{D} \cdot 
\left(
\mathbf{q}^{[1]}_{1:M,\ell,n},
\mathbf{q}^{[2]}_{1:M,\ell,n},
\dots,
\mathbf{q}^{[M]}_{1:M,\ell,n}
\right)^{\top}.
\]
\end{itemize}

\subsection{Correctness, privacy and rate}
\subsubsection{Correctness} Let $W^{[m]}=(\mathbf{a}_{m1},\dots,\mathbf{a}_{mK})$, where $m\in [M]$ and $\mathbf{a}_{mk} \in \mathbb{F}_q^{\frac{L}{K}\times 1}$ are the columns of $W^{[m]}$.

For each undesired file index $m \in [P+1:M]$ and each $j \in [\delta]$,
By Claim \ref{claim}, the $(\gamma +\beta) N$ queried undesired symbols \[\left(\langle \mathbf{q}_{(j-1)(\gamma+\beta)+1,1}^{[m]}, W^{[m]}_{1} \rangle, \dots,
        \langle \mathbf{q}_{(j-1)(\gamma+\beta)+1,N}^{[m]}, W^{[m]}_{N} \rangle, \dots, 
 \langle \mathbf{q}_{j(\gamma+\beta),1}^{[m]}, W^{[m]}_{1} \rangle, \dots,
        \langle \mathbf{q}_{j(\gamma+\beta),N}^{[m]}, W^{[m]}_{N} \rangle\right)\]
  form a codeword in $\operatorname{rowsp}(\mathbf{G}_{\langle\gamma+\beta\rangle}\star\mathbf{H}')$. By Lemma \ref{lem:H'-properties},
$\mathbf{X}\star\mathbf{F}_{\langle\gamma+\beta\rangle}$ is a generator matrix of this code. Therefore, there exists a unique $
(c_{j,1}^{[m]}, c_{j,2}^{[m]},\dots,c_{j,\gamma(N-U-2B)}^{[m]})\in \mathbb{F}_q^{\gamma(N-U-2B)}$
such that
\begin{equation}
\begin{aligned}\label{multiundesired}
&\bigl(\langle \mathbf{q}_{(j-1)(\gamma+\beta)+1,1}^{[m]}, W^{[m]}_{1} \rangle, \dots,
        \langle \mathbf{q}_{(j-1)(\gamma+\beta)+1,N}^{[m]}, W^{[m]}_{N} \rangle, \dots, \\
&\qquad \langle \mathbf{q}_{j(\gamma+\beta),1}^{[m]}, W^{[m]}_{1} \rangle, \dots,
        \langle \mathbf{q}_{j(\gamma+\beta),N}^{[m]}, W^{[m]}_{N} \rangle \bigr) 
        = \bigl( c_{j,1}^{[m]}, c_{j,2}^{[m]},\dots,c_{j,\gamma(N-U-2B)}^{[m]} \bigr) 
        \cdot ( \mathbf{X}\star\mathbf{F}_{\langle\gamma+\beta\rangle} ).
\end{aligned}
\end{equation}
Since the matrix $\mathbf{X}\star\mathbf{F}_{\langle\gamma+\beta\rangle}$ is a $\gamma(N-U-2B) \times (\gamma+\beta)N$ matrix and the first $\gamma N$ columns satisfy the $(U,\gamma)$-uniform puncturing MDS property, it guarantees that even in the presence of $U$ unresponsive servers and $B$ Byzantine servers,  the last $\beta N$ symbols in \eqref{multiundesired} can be recovered from the  $\gamma(N-U)$ downloaded symbols among the first $\gamma N$ symbols in \eqref{multiundesired}. The user is then able to eliminate the interferences from the undesired files in the $M$-wise summations. 
Since each server's $M$-wise summation is encoded using a $P\times M$ column-MDS matrix, after removing the $M-P$ undesired symbols from each summation, all $P$ desired symbols can be recovered. Consequently, every truthful desired symbol can be recovered from the responses of the responsive and non-Byzantine servers. Moreover, note that the erroneous information is still confined to the $B$ Byzantine servers.

For each desired file index $m\in [1:P]$
and  $\ell \in [\gamma+\beta]$, By Claim \ref{claim}, the $\delta N$ queried desired symbols satisfy the following relation:
\begin{equation*}
\begin{aligned}
&\bigl(\langle \mathbf{q}_{(\ell-1)\delta+1,1}^{[m]}, W^{[m]}_{1} \rangle, \dots,
        \langle \mathbf{q}_{(\ell-1)\delta+1,N}^{[m]}, W^{[m]}_{N} \rangle, \dots, \\
&\qquad \langle \mathbf{q}_{\ell \delta ,1}^{[m]}, W^{[m]}_{1} \rangle, \dots,
        \langle \mathbf{q}_{\ell \delta ,N}^{[m]}, W^{[m]}_{N} \rangle \bigr) 
        = \bigl( \mathbf{a}_{m1}^\top S^{[m]}_{\ell}, \dots, \mathbf{a}_{mK}^\top S^{[m]}_{\ell} \bigr) 
        \cdot (\mathbf{G}_{\langle\delta\rangle} \star \mathbf{H}).
\end{aligned}
\end{equation*}
where $S_{\ell}^{[m]}$ denotes the submatrix of $S^{[m]}$ consisting of columns 
$\frac{\delta(N-U-2B)}{K}(\ell-1)+1$ through $\frac{\delta(N-U-2B)}{K}\ell$. 
Since ${\mathbf{G}}_{\langle\delta\rangle}\star {\mathbf{H}}$ is a $\delta(N-U-2B) \times \delta N$ 
 matrix that satisfies the $(U,\delta)$-uniform puncturing MDS property,  uniformly puncturing the same $U$ columns from every $N$-column block yields a $\delta(N-U-2B) \times \delta(N-U)$ matrix which is column-MDS. Hence, it can correct up to $\delta B$ erroneous symbols.    
 The vector 
$( \mathbf{a}_{m1}^{\top} S^{[m]}_{\ell}, \dots, \mathbf{a}_{mK}^{\top} S^{[m]}_{\ell} )$ 
can be uniquely recovered.
Repeating the process for all $\ell \in [\gamma+\beta]$ yields the concatenated 
vector $( \mathbf{a}_{mk}^{\top} S^{[m]}_{1}, \dots, \mathbf{a}_{mk}^{\top} S^{[m]}_{\gamma+\beta}) 
= \mathbf{a}_{mk}^{\top} S^{[m]}$ for each $k \in [K]$. 
Since $S^{[m]}$ is a full‑rank matrix, the entire file $W^{[m]}$ is retrieved once all $\delta(N-U)(\gamma+\beta)$ downloaded desired symbols from the set 
$\{\langle \mathbf{q}_{i,n}^{[m]}, W^{[m]}_{n} \rangle : i\in [\delta(\gamma+\beta)],\, n\in [N]\}$ are collected.   Therefore the scheme is correct.

\subsubsection{Privacy}
To prove that the scheme is $T$ private, it suffices to show for $\forall \{n_1,\dots,n_{T}\}\subseteq[N]$, it holds that
\begin{align}\label{multiprivacyT}
&\left(\mathbf{q}_{1,n_1}^{[1]},\dots,\mathbf{q}_{1,n_T}^{[1]},\dots,\mathbf{q}_{\delta(\gamma+\beta),n_1}^{[1]},\dots,\mathbf{q}_{\delta(\gamma+\beta),n_T}^{[1]} \right)\\
&\quad \sim \left(\mathbf{q}_{1,n_1}^{[m]},\dots,\mathbf{q}_{1,n_T}^{[m]},\dots,\mathbf{q}_{\delta(\gamma+\beta),n_1}^{[m]},\dots,\mathbf{q}_{\delta(\gamma+\beta),n_T}^{[m]} \right), \quad \forall m \in [1:M].\notag
\end{align}

Similar to the privacy proof for single-file PIR in Section~\ref{Sec:main}, since both $\overline{\mathbf{H}}$ and $\overline{\mathbf{H}}'$ satisfy the $(T,\delta(\gamma+\beta))$-uniform selection full-rank property, and since $S^{[1]},\dots,S^{[M]}$ are independently and uniformly chosen full-rank matrices, it follows that for any $m\in[M]$,  $
(
\mathbf{q}_{1,n_1}^{[m]},\dots,\mathbf{q}_{1,n_T}^{[m]},\dots,
\mathbf{q}_{\delta(\gamma+\beta),n_1}^{[m]},\dots,
\mathbf{q}_{\delta(\gamma+\beta),n_T}^{[m]})$
are identically distributed as $S^{[m]}(:,1:\delta(\gamma+\beta)T)$. Therefore, \eqref{multiprivacyT} holds.

\subsubsection{Rate} 
The PIR rate is computed as follows:
\begin{align*}
R = \frac{PL}{D}
&= \frac{P\delta (\gamma+\beta)(N-U-2B)}{\delta \gamma (N-U)M+\delta \beta (N-U)P } \\
&= \frac{N-U-2B}{N-U}\cdot \frac{P(\gamma+\beta)}{\gamma M+\beta P} \\
&= \frac{N-U-2B}{N-U}\cdot \frac{P}{\left(\frac{\gamma}{\gamma+\beta}\right)M+\left(1-\frac{\gamma}{\gamma+\beta}\right)P} \\
&= \frac{N-U-2B}{N-U}\cdot \frac{P(N-U-2B)}{(N-U-2B)P+(K+T-1)(M-P)}\\
&=\frac{N-U-2B}{N-U}\left(
1+\frac{(K+T-1)(M-P)}{P(N-U-2B)}
\right)^{-1}.
\end{align*}

\subsection{An example for $(P,M,U,B, N,T,K)=(2,3,1,0,7,2,2)$}
  Each file is comprised of $L=12$ independent symbols from $\mathbb{F}_{7}$:
\begin{align*}
    W^{[1]}=({\bf a}_1, {\bf a}_2),\quad W^{[2]}=({\bf b}_1, {\bf b}_2), \quad W^{[3]}=({\bf c}_1, {\bf c}_2),
\end{align*}
where ${\bf a}_1, {\bf a}_2,{\bf b}_1, {\bf b}_2, {\bf c}_1, {\bf c}_2\in \mathbb{F}_{7}^{6\times1}$.  Each file is stored in the system via a GRS code with generator matrix $\mathbf{G}$. Let 
\begin{align*}
    {\bf G}=\left(\begin{array}{ccccccc}
        1 &1&1&1&1&1&1  \\
         0&1&2&3&4 &5&6
    \end{array}\right)_{2\times 7}.
\end{align*}
The storage is specified as
\begin{equation*}
 (W^{[m]}_{1},\dots, W^{[m]}_{7})=W^{[m]}{\bf G},
\end{equation*}
where $W^{[m]}_{n}\in \mathbb{F}_{7}^{6\times1}$ for $m\in[3],n\in[7]$.

Assume $W^{[1]}$ and $W^{[2]}$ are the desired files. The user privately chooses three random matrices $S^{[1]}$, $S^{[2]}$, and $S^{[3]}$, independently and uniformly from the set of all full-rank $6\times 6$ matrices over $\mathbb{F}_{7}$.   Let
\begin{align*}
    {\bf H}=\left(\begin{array}{ccccccc}
     1 &1&1&1&1&1&1  \\
    0&1&2&4&3 &6&5\\
    0&1&4&2&2&1&4
    \end{array}\right)_{3\times 7}.
\end{align*}
 $\mathbf{H}$ is a $3\times 7$ matrix satisfying  the $(2,1)$-uniform selection full-rank property, and $\mathbf{G}\star \mathbf{H}$ is a $6\times 7$ matrix satisfying the $(1,1)$-uniform puncturing MDS property. 
 Let
\begin{align*}
    {\bf G}'&=\left(\begin{array}{ccccccc}
        1 &1&1&1&1&1&1  \\
         0&1&2&3&4 &5&6
    \end{array}\right)_{2\times 7},\\
    \mathbf{F}&=\left(\begin{array}{ccccccc}
        1 &1&1&1&1&1&1  \\
         0&1&2&3&4 &5&6\\
         0&1&4&2&2&4&1
    \end{array}\right)_{3\times 7},\\
    \mathbf{X}&=\left(\begin{array}{cccccccccccccc}
        1 &1&1&1&1&1&1&0&0&0&0&0&0&0  \\
         0&1&1&6&1&6&6&1&1&1&1&1&1&1
    \end{array}\right)_{2\times 14}.
\end{align*}
$\mathbf{G}'$ is a generator matrix of a GRS code of dimension $T=2$ that shares the same evaluation points as the storage GRS code. Therefore, $\operatorname{rowsp} (\mathbf{G}) \star \operatorname{rowsp}( \mathbf{G}' )$ is a GRS code of dimension $3$ and  $\mathbf{F}$ is a generator matrix of this GRS code. $\mathbf{X}$ satisfies the $(1,2)$-uniform selection full-rank property, and such that the matrix formed by the first $7$ columns of $\mathbf{X}\star\mathbf{F}_{\langle2\rangle}$ satisfies the $(1,1)$-uniform puncturing MDS property. Define $\mathbf{H}' \triangleq \mathbf{X} \star \mathbf{G}'_{\langle 2 \rangle}$, which is a $4\times 14$ matrix over $\mathbb{F}_7$.
According to Lemma \ref{lem:H'-properties}, $\mathbf{H}'$   satisfies the $(2,2)$-uniform selection full-rank property, and $
\operatorname{rowsp}(\mathbf{G}_{\langle 2 \rangle} \star \mathbf{H}')
=
\operatorname{rowsp}(\mathbf{X} \star \mathbf{F}_{\langle 2 \rangle})$.
Therefore, $\mathbf{X} \star \mathbf{F}_{\langle 2\rangle}$  is a generator matrix for 
$\operatorname{rowsp}(\mathbf{G}_{\langle 2\rangle} \star \mathbf{H}')$.

Define $\overline{\mathbf{H}} = \operatorname{diag}(\mathbf{H},\mathbf{H})$, which is a block-diagonal matrix with $\mathbf{H}$ appearing   twice on the diagonal. 
Define 
\begin{align*}
\left({\mathbf q}_{1,1}^{[1]},\dots, {\mathbf q}_{1,7}^{[1]},{\mathbf q}_{2,1}^{[1]},\dots, {\mathbf q}_{2,7}^{[1]}\right) &= S^{[1]} \cdot{\overline{\mathbf{H}}},\\
\left({\mathbf q}_{1,1}^{[2]},\dots, {\mathbf q}_{1,7}^{[2]},{\mathbf q}_{2,1}^{[2]},\dots, {\mathbf q}_{2,7}^{[2]}\right) &= S^{[2]}\cdot{\overline{\mathbf{H}}},\\
\left({\mathbf q}_{1,1}^{[3]},\dots, {\mathbf q}_{1,7}^{[3]},{\mathbf q}_{2,1}^{[3]},\dots, {\mathbf q}_{2,7}^{[3]}\right) &= S^{[3]}(:,1:4) \cdot{{\mathbf{H}}'}.
\end{align*}
Let \[\mathbf{D} = \begin{pmatrix} 1 & 1 & 1 \\ 1 & 2 & 3 \end{pmatrix}\] be a fixed $2\times 3$ column-MDS matrix over $\mathbb{F}_7$.

\begin{table}[t]
\centering
\fontsize{7pt}{9pt}\selectfont   
\renewcommand{\arraystretch}{1.2} 
\setlength{\tabcolsep}{1.2pt}        
\caption{Query structure.}
\label{table3}
\begin{tabular}{|c| c| c| c| c| c| c|}
\hline
Server $1$ & Server $2$ & Server $3$ & Server $4$ & Server $5$ & Server $6$ &Server $7$\\
\hline
$\mathbf{q}_{1,1}^{[1]}$ & $\mathbf{q}_{1,2}^{[1]}$ & $\mathbf{q}_{1,3}^{[1]}$ & $\mathbf{q}_{1,4}^{[1]}$ & $\mathbf{q}_{1,5}^{[1]}$ & $\mathbf{q}_{1,6}^{[1]}$& $\mathbf{q}_{1,7}^{[1]}$\\
$\mathbf{q}_{1,1}^{[2]}$ & $\mathbf{q}_{1,2}^{[2]}$ & $\mathbf{q}_{1,3}^{[2]}$ & $\mathbf{q}_{1,4}^{[2]}$ & $\mathbf{q}_{1,5}^{[2]}$ & $\mathbf{q}_{1,6}^{[2]}$& $\mathbf{q}_{1,7}^{[2]}$\\ 
$\mathbf{q}_{1,1}^{[3]}$ & $\mathbf{q}_{1,2}^{[3]}$ & $\mathbf{q}_{1,3}^{[3]}$ & $\mathbf{q}_{1,4}^{[3]}$ & $\mathbf{q}_{1,5}^{[3]}$ & $\mathbf{q}_{1,6}^{[3]}$& $\mathbf{q}_{1,7}^{[3]}$\\
$\mathbf{q}_{2,1}^{[1]}+\mathbf{q}_{2,1}^{[2]}+\mathbf{q}_{2,1}^{[3]}$ & $\mathbf{q}_{2,2}^{[1]}+\mathbf{q}_{2,2}^{[2]}+\mathbf{q}_{2,2}^{[3]}$ & $\mathbf{q}_{2,3}^{[1]}+\mathbf{q}_{2,3}^{[2]}+\mathbf{q}_{2,3}^{[3]}$ & $\mathbf{q}_{2,4}^{[1]}+\mathbf{q}_{2,4}^{[2]}+\mathbf{q}_{2,4}^{[3]}$ & $\mathbf{q}_{2,5}^{[1]}+\mathbf{q}_{2,5}^{[2]}+\mathbf{q}_{2,5}^{[3]}$ & $\mathbf{q}_{2,6}^{[1]}+\mathbf{q}_{2,6}^{[2]}+\mathbf{q}_{2,6}^{[3]}$& $\mathbf{q}_{2,7}^{[1]}+\mathbf{q}_{2,7}^{[2]}+\mathbf{q}_{2,7}^{[3]}$\\
$\mathbf{q}_{2,1}^{[1]}+2\mathbf{q}_{2,1}^{[2]}+3\mathbf{q}_{2,1}^{[3]}$ & $\mathbf{q}_{2,2}^{[1]}+2\mathbf{q}_{2,2}^{[2]}+3\mathbf{q}_{2,2}^{[3]}$ & $\mathbf{q}_{2,3}^{[1]}+2\mathbf{q}_{2,3}^{[2]}+3\mathbf{q}_{2,3}^{[3]}$ & $\mathbf{q}_{2,4}^{[1]}+2\mathbf{q}_{2,4}^{[2]}+3\mathbf{q}_{2,4}^{[3]}$ & $\mathbf{q}_{2,5}^{[1]}+2\mathbf{q}_{2,5}^{[2]}+3\mathbf{q}_{2,5}^{[3]}$ & $\mathbf{q}_{2,6}^{[1]}+2\mathbf{q}_{2,6}^{[2]}+3\mathbf{q}_{2,6}^{[3]}$& $\mathbf{q}_{2,7}^{[1]}+2\mathbf{q}_{2,7}^{[2]}+3\mathbf{q}_{2,7}^{[3]}$\\
\hline
\end{tabular}
\end{table}

The query structure is shown in Table~\ref{table3}. Upon receiving the query vectors of the forms $\mathbf{q}_{1,n}^{[m]}$ for $m\in [3]$, $\mathbf{q}_{2,n}^{[1]}+\mathbf{q}_{2,n}^{[2]}+\mathbf{q}_{2,n}^{[3]}$ and $\mathbf{q}_{2,n}^{[1]}+2\mathbf{q}_{2,n}^{[2]}+3\mathbf{q}_{2,n}^{[3]}$, server $n$ computes the corresponding inner products: $\langle\mathbf{q}_{1,n}^{[m]}, W_n^{[m]}\rangle$ for  $m\in [3]$,  $\langle\mathbf{q}_{2,n}^{[1]}, W_n^{[1]}\rangle+\langle\mathbf{q}_{2,n}^{[2]}, W_n^{[2]}\rangle+\langle\mathbf{q}_{2,n}^{[3]}, W_n^{[3]}\rangle$ and $\langle\mathbf{q}_{2,n}^{[1]}, W_n^{[1]}\rangle+2\langle\mathbf{q}_{2,n}^{[2]}, W_n^{[2]}\rangle+3\langle\mathbf{q}_{2,n}^{[3]}, W_n^{[3]}\rangle$. The 3-wise summations are encoded by the column-MDS matrix $\mathbf{D}$. The construction of the scheme is now complete.

\subsubsection{$T=2$ privacy}
To prove that the scheme is $T=2$ private, it suffices to show for any $\{n_1,n_2\}\subseteq  [7]$
\begin{align}\label{multiT=2}
\left(\mathbf{q}_{1,n_1}^{[1]}, \mathbf{q}_{1,n_2}^{[1]},\mathbf{q}_{2,n_1}^{[1]},\mathbf{q}_{2,n_2}^{[1]}\right)\sim \left(\mathbf{q}_{1,n_1}^{[m]},\mathbf{q}_{1,n_2}^{[m]},\mathbf{q}_{2,n_1}^{[m]}, \mathbf{q}_{2,n_2}^{[m]}\right),\quad \forall m\in [3].
\end{align}
Since both $\overline{\mathbf{H}}$ and ${\mathbf{H}}'$ satisfy the $(2,2)$-uniform selection full-rank property, uniformly selecting the same $T=2$ columns from every $7$-column block  yields a matrix of full column rank.  $S^{[1]},S^{[2]},S^{[3]}$ are independently and uniformly chosen full-rank matrices. According to Lemma \ref{lem:dim}, it follows that for any $m\in[3]$,  $
(\mathbf{q}_{1,n_1}^{[m]},\mathbf{q}_{1,n_2}^{[m]},\mathbf{q}_{2,n_1}^{[m]}, \mathbf{q}_{2,n_2}^{[m]})$
are identically distributed as $S^{[m]}(:,1:4)$. Therefore, \eqref{multiT=2} holds.

\subsubsection{Correctness}
By Claim \ref{claim} and Lemma \ref{lem:H'-properties}, there exists a unique $
(c_{1}, c_{2},\dots,c_{6})\in \mathbb{F}_7^{6}$
such that
\begin{align*}
\left(\langle \mathbf{q}_{1,1}^{[3]}, W^{[3]}_{1} \rangle, \dots,
        \langle \mathbf{q}_{1,7}^{[3]}, W^{[3]}_{7} \rangle, 
\langle \mathbf{q}_{2,1}^{[3]}, W^{[3]}_{1} \rangle, \dots,
        \langle \mathbf{q}_{2,7}^{[3]}, W^{[3]}_{7} \rangle \right) 
        = \left( c_{1}, c_{2},\dots,c_{6} \right) 
        \cdot\left(\mathbf{X}\star\mathbf{F}_{\langle2\rangle}\right).
\end{align*}
 Since the first seven columns of $\mathbf{X}\star\mathbf{F}_{\langle2\rangle}$ satisfy the $(1,1)$-uniform puncturing MDS property, puncturing any one column of $(\mathbf{X}\star\mathbf{F}_{\langle2\rangle})(:,1:7)$ yields a $6\times 6$ invertible matrix.  Therefore, $( c_{1}, c_{2},\dots,c_{6})$ can be recovered from $6$ downloaded symbols among $\{\langle \mathbf{q}_{1,1}^{[3]}, W^{[3]}_{1} \rangle, \dots,
        \langle \mathbf{q}_{1,7}^{[3]}, W^{[3]}_{7} \rangle\}$, and consequently the seven symbols $\{\langle \mathbf{q}_{2,1}^{[3]}, W^{[3]}_{1} \rangle, \dots, \langle \mathbf{q}_{2,7}^{[3]}, W^{[3]}_{7} \rangle\}$ can be computed. The user can then eliminate the interferences from the undesired file in the $3$-wise summations. Since each server's $3$-wise summation is encoded using a $2\times 3$ column-MDS matrix, after removing the undesired symbol from each such summation, its two desired symbols can be recovered.

For each  desired file index $m\in [1:2]$ and each $\ell\in [1:2]$. Assume $W^{[m]}=(\mathbf{x}_1, \mathbf{x}_2)$. The  queried desired symbols satisfy the following relation:
\begin{align*}
\left(\langle \mathbf{q}_{\ell,1}^{[m]}, W^{[m]}_{1} \rangle, \dots,
        \langle \mathbf{q}_{\ell,7}^{[m]}, W^{[m]}_{7} \rangle
\right) 
        = \left( \mathbf{x}_{1}^\top S^{[m]}(:,3\ell-2:3\ell) ,  \mathbf{x}_{2}^\top S^{[m]}(:,3\ell-2:3\ell) \right) 
        \cdot \left(\mathbf{G}\star \mathbf{H}\right).
\end{align*}
Since $\mathbf{G}\star \mathbf{H}$ satisfies the $(1,1)$-uniform puncturing MDS property,  even in the presence of one unresponsive server, the vector $( \mathbf{x}_{1}^\top S^{[m]}(:,3\ell-2:3\ell) ,  \mathbf{x}_{2}^\top S^{[m]}(:,3\ell-2:3\ell) )$  can be recovered from received $6$ desired symbols among  $\{\langle \mathbf{q}_{\ell,1}^{[m]}, W^{[m]}_{1} \rangle, \dots, \langle \mathbf{q}_{\ell,7}^{[m]}, W^{[m]}_{7} \rangle\}$. Repeating the process for all $\ell \in [2]$ yields the concatenated 
vectors $\mathbf{x}_{1}^{\top} S^{[m]}$ and $\mathbf{x}_{2}^{\top} S^{[m]}$. Moreover, since $S^{[m]}$ is full rank, the entire file $W^{[m]}$ is retrieved once all $12$ downloaded desired symbols are recovered.  Therefore the scheme is correct. The PIR rate is $R=\frac{24}{30}=\frac{4}{5}$.

\subsection{The converse bound}

In this subsection, we consider $P\geq  \frac{M}{2}$ and
$N-U-2B\geq KT$. We derive a converse bound for the multi-file
U-B-MDS-TPIR problem and show that it matches the achievable rate
in Theorem~\ref{thm:multi-file-UB} when either $K=1$ or $T=1$.

Lemma~\ref{lem:UB-decoding} extends directly to the multi-file setting
by replacing the desired file $W^{[m]}$ with the  desired files
$W^{[\mathcal{P}]}=\{W^{[m]}: m\in \mathcal{P}\}$. Hence, the desired files can be recovered
from the truthful answers of any $N-U-2B$ responsive servers. Similarly, Remark~\ref{rem:reduced-MDS-TPIR} extends directly to the
multi-file setting. For any set
$\mathcal{U}\subseteq[N]$ of $U$ unresponsive servers, and any subset
$\mathcal{I}\subseteq[N]\setminus\mathcal{U}$ with
$|\mathcal{I}|=N-U-2B$, the truthful responses of the servers in
$\mathcal{I}$ induce a valid $[N-U-2B,K]$ multi-file MDS-TPIR scheme.
Therefore,
\[
\sum_{n\in\mathcal{I}} H(A_n^{[\mathcal{P}]})
\geq
\frac{H(W^{[\mathcal{P}]})}
{C_{\mathrm{multi\text{-}file\ MDS\text{-}TPIR}}(N-U-2B)}.
\]

Since
\[
H(W^{[\mathcal{P}]})=PL,
\]
the total download can be lower bounded as
\begin{align*}
D
&\geq
\sum_{n\in[N]\setminus\mathcal{U}}
H(A_n^{[\mathcal{P}]}) \\
&=
\frac{N-U}{N-U-2B}
\cdot
\frac{1}{\binom{N-U}{N-U-2B}}
\sum_{\substack{
\mathcal{I}\subseteq[N]\setminus\mathcal{U}\\
|\mathcal{I}|=N-U-2B
}}
\sum_{n\in\mathcal{I}}
H(A_n^{[\mathcal{P}]}) \\
&\geq
\frac{N-U}{N-U-2B}
\cdot
\frac{PL}
{C_{\mathrm{multi\text{-}file\ MDS\text{-}TPIR}}(N-U-2B)}.
\end{align*}

Therefore, the rate of any multi-file U-B-MDS-TPIR scheme satisfies
\begin{align}
R
&=
\frac{PL}{D} \notag\\
&\leq
\frac{N-U-2B}{N-U}
C_{\mathrm{multi\text{-}file\ MDS\text{-}TPIR}}(N-U-2B).
\label{eq:multi-file-converse}
\end{align}

For $P\geq \frac{M}{2}$, the capacity of multi-file MDS-TPIR with $K=1$ or $T=1$ was established in~\cite{zhang2017private}. For our purpose, only the upper bounds are needed. For completeness, we re-prove these upper bounds in the Appendix without imposing the symmetry assumption.
The upper bounds for multi-file TPIR and
multi-file MDS-PIR in the cases $K=1$ and $T=1$, respectively, are given by
\begin{align}\label{multiTPIR}
C_{\mathrm{multi\text{-}file\ TPIR}}(N-U-2B)
\leq
\left(
1+
\frac{T(M-P)}
{P(N-U-2B)}
\right)^{-1},
\end{align}
when $K=1$, and
\begin{align}\label{multiMDSPIR}
C_{\mathrm{multi\text{-}file\ MDS\text{-}PIR}}(N-U-2B)
\leq 
\left(
1+
\frac{K(M-P)}
{P(N-U-2B)}
\right)^{-1},
\end{align}
when $T=1$.

Applying these upper bounds  to \eqref{eq:multi-file-converse},
when $K=1$, the rate of  multi-file U-B-TPIR scheme satisfies
\[
R
\leq
\frac{N-U-2B}{N-U}
\left(
1+
\frac{T(M-P)}
{P(N-U-2B)}
\right)^{-1},
\]
and when $T=1$, the rate of multi-file U-B-MDS-PIR scheme satisfies
\[
R
\leq
\frac{N-U-2B}{N-U}
\left(
1+
\frac{K(M-P)}
{P(N-U-2B)}
\right)^{-1}.
\]
Both upper bounds coincide with the corresponding achievable rates in
Theorem~\ref{thm:multi-file-UB}. Hence, the proposed multi-file
U-B-MDS-TPIR scheme is capacity-achieving when either $K=1$ or $T=1$.

\section{Conclusion}
In this paper, we studied private information retrieval
from GRS-coded storage systems with colluding, robust,
and Byzantine servers. For $N-U-2B\ge KT$, we characterize the capacity of  linear, full
support-rank U-B-MDS-TPIR scheme based on Schur product
codes and
carefully designed query matrices, which enables private retrieval
while tolerating server failures and Byzantine errors.  When $K=1$ or $T=1$, the same capacity
expression holds without the linearity and full
support-rank restrictions.

We further extended the construction to $P$-out-of-$M$
multi-file U-B-MDS-TPIR under the same condition
$N-U-2B \ge KT$.
For $P \ge M/2$, the resulting scheme achieves capacity
when either $K=1$ or $T=1$.

A natural direction for future work is to characterize
the capacity of U-B-MDS-TPIR for $K>1$ and $T>1$ without
the linearity and full support-rank restrictions.

\section*{AI Statement}

During the preparation of this work, the authors used DeepSeek and ChatGPT for language polishing. In addition, ChatGPT was used as an auxiliary tool to assist in constructing the query matrix $\mathbf{H}'$ in the proposed protocol. The authors reviewed and verified all AI-assisted content and take full responsibility for the content of this publication.

\bibliographystyle{IEEEtran}
\bibliography{reference}

\appendix
In this appendix, we provide direct proofs of the converse bounds for
multi-file TPIR and multi-file MDS-PIR, which are used to establish the
capacity of multi-file U-B-MDS-TPIR when $K=1$ or $T=1$. Throughout this
appendix, we consider the parameter regime
\[
 P\geq \frac{M}{2},
\qquad
N'\geq KT,
\]
where $M$ denotes the total number of files, $P$ denotes the number of
desired files, and $N'$ denotes the number of servers in the
considered multi-file MDS-TPIR system.

Define 
\begin{align*}
\mathcal{Q}&\triangleq\{Q_n^{[\mathcal{P}]}: \mathcal{P}\subseteq [1:M], |\mathcal{P}|=P, n\in [1:N'] \},\\
A^{[\mathcal{P}]}_{\mathcal{X}}&\triangleq \{A_n^{[\mathcal{P}]}: n\in \mathcal{X}\}.
\end{align*}
\begin{theorem}[Han's Inequality, \cite{Cover2006}]\label{th:Han's}
For any integer $X\in[1:N']$,
\[
\frac{1}{\binom{N'}{X}}
\sum_{\substack{\mathcal{X}\subseteq[1:N']\\|\mathcal{X}|=X}}
H(A_{\mathcal{X}}^{[M-P+1:M]}|W_{1:P},\mathcal{Q})
\geq
\frac{X}{N'}
H(A_{1:N'}^{[M-P+1:M]}|W_{1:P},\mathcal{Q}).
\]
\end{theorem}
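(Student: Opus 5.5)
This is the standard Han's inequality (Cover--Thomas, Theorem 17.6.1), applied to the random variables $A_1^{[M-P+1:M]},\dots,A_{N'}^{[M-P+1:M]}$ conditioned on $(W_{1:P},\mathcal{Q})$. The plan is to prove the conditional version directly by the usual two-step route. First establish the ``leave-one-out'' form. Then pass to general $X$ by averaging and a monotonicity argument.

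\textbf{Step 1 (leave-one-out).} Write $A_n$ for $A_n^{[M-P+1:M]}$ and abbreviate the conditioning by $Z=(W_{1:P},\mathcal{Q})$. By the chain rule, for each $n$,
\[
H(A_{1:N'}\mid Z)=H(A_{[N']\setminus\{n\}}\mid Z)+H(A_n\mid A_{1:n-1},A_{n+1:N'},Z).
\]
The last term is at most $H(A_n\mid A_{1:n-1},Z)$, since conditioning reduces entropy. Summing over $n\in[N']$ and using the chain rule again gives
\[
N'H(A_{1:N'}\mid Z)\le \sum_{n}H(A_{[N']\setminus\{n\}}\mid Z)+H(A_{1:N'}\mid Z).
\]
Equivalently,
\[
\frac{1}{N'}\sum_n \frac{H(A_{[N']\setminus\{n\}}\mid Z)}{N'-1}\ge \frac{H(A_{1:N'}\mid Z)}{N'}.
\]

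\textbf{Step 2 (monotone averages).} Define
\[
h_X=\frac{1}{\binom{N'}{X}}\sum_{|\mathcal{X}|=X}\frac{H(A_{\mathcal{X}}\mid Z)}{X}.
\]
Apply Step 1 to each subset $\mathcal{X}$ of size $X$, viewed as its own collection of $X$ variables. Averaging over all $\mathcal{X}$, each $(X-1)$-subset is counted equally often, so $h_{X-1}\ge h_X$. Iterating this from $N'$ down to $X$ gives $h_X\ge h_{N'}=H(A_{1:N'}\mid Z)/N'$. Multiplying by $X$ yields exactly the stated inequality.

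There is no real obstacle here. The only care needed is the counting in Step 2, namely that each $(X-1)$-subset appears in exactly $N'-X+1$ of the $X$-subsets, so the normalized averages match. All conditional entropies are finite because the answers take values in finite sets, so the manipulations are legitimate. Alternatively, one may simply cite \cite{Cover2006}, since conditioning on $Z$ preserves every step.
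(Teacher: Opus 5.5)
Your proof is correct and is essentially the same as the paper's: the paper gives no argument of its own and simply cites Cover--Thomas. Your two-step route (the leave-one-out inequality, then monotonicity $h_{X-1}\ge h_X$ of the normalized subset averages, with the counting factor $N'-X+1$ handled correctly) is exactly the standard proof from that reference, with the conditioning on $(W_{1:P},\mathcal{Q})$ carried through every step.
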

The following equality follows from \cite[Lemma 2]{banawan2018multi}.
\begin{align}
    H(A_{1:N'}^{[M-P+1:M]}|W_{1:P},\mathcal{Q})&=I(W_{P+1:M};A_{1:N'}^{[M-P+1:M]}|W_{1:P},\mathcal{Q})\notag\\
    &=H(W_{P+1:M}|W_{1:P},\mathcal{Q})-H(W_{P+1:M}|A_{1:N'}^{[M-P+1:M]},W_{1:P},\mathcal{Q})\notag\\
    &=H(W_{P+1:M}|W_{1:P},\mathcal{Q})\label{M-P+1}\\
    &=(M-P)L\label{(M-P)L},
\end{align}
where \eqref{M-P+1} follows from $P\geq \frac{M}{2}$
 and $[P+1:M]\subseteq [M-P+1:M]$.
 
 We next consider the two special cases $K=1$ and $T=1$ separately.
\subsection{Converse bound for multi-file TPIR}
Let $\mathcal{T}\subseteq [1:N']$ with $|\mathcal{T}|=T$. The total download satisfies 
\begin{align}
    D& \geq H(A_{1:N'}^{[1:P]}|\mathcal{Q})\notag\\   &=H(W_{1:P},A_{1:N'}^{[1:P]}|\mathcal{Q})\notag\\
    &\geq H(W_{1:P}|\mathcal{Q})+H(A_{\mathcal{T}}^{[1:P]}|W_{1:P},\mathcal{Q})\notag\\
    &=PL+H(A_{\mathcal{T}}^{[M-P+1:M]}|W_{1:P},\mathcal{Q})\label{A_{1:T}},
\end{align}
where \eqref{A_{1:T}} is a consequence of $T$-privacy: The joint entropy of the downloads from any set of at most $T$ servers must be identical regardless of which files are retrieved (see \cite[Lemma 2]{sun2017private} for proof details, which are given for single-file retrieval but extend directly to the multi-file setting). 

Since \eqref{A_{1:T}} holds for every $T$-element subset of $[1:N']$, summing the $\binom{N'}{T}$ inequalities and averaging over them yields
\begin{align}
D&\geq PL+\frac{1}{\binom{N'}{T}}
\sum_{\substack{\mathcal{T}\subseteq[1:N']\\|\mathcal{T}|=T}}
H(A_{\mathcal{T}}^{[M-P+1:M]}|W_{1:P}, \mathcal{Q})\notag\\
&\geq PL+\frac{T}{N'}
H(A_{1:N'}^{[M-P+1:M]}|W_{1:P},\mathcal{Q})\label{THan's}\\
&=PL+\frac{T}{N'}(M-P)L\label{D-T},
\end{align}
where \eqref{THan's} follows from Han's Inequality and \eqref{D-T} follows from \eqref{(M-P)L}. From \eqref{D-T}, we have 
\begin{align*}
C_{\mathrm{multi\text{-}file\ TPIR}}(N')
\leq
\left(
1+
\frac{T(M-P)}
{PN'}
\right)^{-1},
\end{align*}
which proves \eqref{multiTPIR}.
\subsection{Converse bound for multi-file MDS-PIR}
 Let $\mathcal{K}\subseteq [1:N']$ with $|\mathcal{K}|=K$. The total download satisfies 
\begin{align}
    D& \geq H(A_{1:N'}^{[1:P]}|\mathcal{Q})\notag\\   &=H(W_{1:P},A_{1:N'}^{[1:P]}|\mathcal{Q})\notag\\
    &\geq H(W_{1:P}|\mathcal{Q})+H(A_{\mathcal{K}}^{[1:P]}|W_{1:P},\mathcal{Q})\notag\\
    &=PL+\sum_{n\in \mathcal{K}}H(A_{n}^{[1:P]}|W_{1:P},\mathcal{Q})\label{A_1{1:K}}\\
    &=PL+\sum_{n\in \mathcal{K}}H(A_{n}^{[M-P+1:M]}|W_{1:P},\mathcal{Q})\label{{K}privacy}\\
    &=PL+H(A_{\mathcal{K}}^{[M-P+1:M]}|W_{1:P},\mathcal{Q})\label{A_2{1:K}},
\end{align}
where \eqref{{K}privacy} is a consequence of $T=1$ privacy, \eqref{A_1{1:K}} and \eqref{A_2{1:K}} follow from the independence of the contents on $K$ servers from an $(N',K)$-MDS coded storage system (see \cite[Lemma 3]{sun2017private} for proof details). 

Since \eqref{A_2{1:K}} holds for every $K$-element subset of $[1:N']$, summing the $\binom{N'}{K}$ inequalities and averaging over them yields
\begin{align}
D&\geq PL+\frac{1}{\binom{N'}{K}}
\sum_{\substack{\mathcal{K}\subseteq[1:N']\\|\mathcal{K}|=K}}
H(A_{\mathcal{K}}^{[M-P+1:M]}|W_{1:P}, \mathcal{Q})\notag\\
&\geq PL+\frac{K}{N'}
H(A_{1:N'}^{[M-P+1:M]}|W_{1:P},\mathcal{Q})\label{KHan's}\\
&=PL+\frac{K}{N'}(M-P)L\label{D-K},
\end{align}
where \eqref{KHan's} follows from Han's Inequality and \eqref{D-K} follows from \eqref{(M-P)L}. From \eqref{D-K}, we have 
\begin{align*}
C_{\mathrm{multi\text{-}file\ MDS\text{-}PIR}}(N')
\leq 
\left(
1+
\frac{K(M-P)}
{PN'}
\right)^{-1},
\end{align*}
which proves \eqref{multiMDSPIR}.

\end{document}